\documentclass[a4paper,onecolumn,11pt,accepted=2025-04-13]{quantumarticle}
\pdfoutput=1
\usepackage[utf8]{inputenc}
\usepackage[english]{babel}
\usepackage[T1]{fontenc}
\usepackage{amsmath}
\usepackage{amssymb}
\usepackage{multirow}
\usepackage{braket}
\usepackage{ dsfont }
\usepackage{amsthm}
\usepackage{soul}
\usepackage{tikz}
\usepackage{lipsum}
\usepackage[numbers,sort&compress]{natbib}
\usepackage{hyperref}

\newtheorem{theorem}{Theorem}
\newtheorem{lemma}{Lemma}

\newtheorem{definition}{Definition}

\newenvironment{mythm}[1]
  {\innercustomthm}
  {\endinnercustomthm}

\begin{document}

\title{Pauli path simulations of noisy quantum circuits beyond average-case}

\author{Guillermo González-García}
\affiliation{Max-Planck-Institut für Quantenoptik, Hans-Kopfermann-Str.~1, 85748 Garching, Germany}
\affiliation{Munich Center for Quantum Science and Technology (MCQST), Schellingstr. 4, D-80799 Munich, Germany}
\email{guillermo.gonzalez@mpq.mpg.de}
\author{J. Ignacio Cirac}
\affiliation{Max-Planck-Institut für Quantenoptik, Hans-Kopfermann-Str.~1, 85748 Garching, Germany}
\affiliation{Munich Center for Quantum Science and Technology (MCQST), Schellingstr. 4, D-80799 Munich, Germany}
\author{Rahul Trivedi}
\affiliation{Max-Planck-Institut für Quantenoptik, Hans-Kopfermann-Str.~1, 85748 Garching, Germany}
\affiliation{Deparment of Electrical and Computer Engineering, University of Washington, Seattle, Washington 98195, USA}
\maketitle

\begin{abstract}

For random quantum circuits on $n$ qubits of depth $\Theta(\log n)$ with depolarizing noise, the task of sampling from the output state can be efficiently performed classically using a Pauli path method \cite{Aharonov2023_paulipaths}. This paper aims to study the performance of this method beyond random circuits. We first consider the classical simulation of local observables in circuits composed of Clifford and $T$ gates --- going beyond the average-case analysis, we derive sufficient conditions for simulability in terms of the noise rate and the fraction of gates that are $T$ gates, and show that if noise is introduced at a faster rate than $T$ gates, the simulation becomes classically easy. As an application of this result, we study 2D QAOA circuits that attempt to find low-energy states of classical Ising models on general graphs. There, our results shows that for hard instances of the problem, which correspond to Ising model's graph being geometrically non-local, a QAOA algorithm mapped to a geometrically local circuit architecture using SWAP gates does not have any asymptotic advantage over classical algorithms if depolarized at a constant rate. Finally, we illustrate instances where the Pauli path method fails to give the correct result, and also initiate a study of the trade-off between fragility to noise and classical complexity of simulating a given quantum circuit. 
\end{abstract}

\tableofcontents

\section{Introduction}

Fault-tolerant quantum computers are believed to offer substantial speed-ups in different problems, such as quantum simulation \cite{georgescu2014_quantumsimulation} or prime factorization \cite{Shor_1994,Shor_1997}. While there has been impressive experimental progress in error correction \cite{ibm2024qec,google2023qec}, current quantum devices are unable to reach the fault-tolerant regime, due to the high levels of noise and limited number of qubits. These are the so-called NISQ (Noisy Intermediate Scale Quantum) devices \cite{Preskill2018NISQ}. As a consequence, there is now widespread interest in exploring the capabilities of NISQ devices \cite{chen2023complexity}. One family of widely studied methods are variational quantum algorithms \cite{cerezo2021variational}, which attempt to combine classical and quantum resources to solve complex optimization problems. Out of these, QAOA (for solving classical combinatiorial optimizazion problems) \cite{farhi2014qaoa}, and VQE (for preparing groundstates) \cite{peruzzo2014variational} stand out. However, variational algorithms are not without problems. A number of challenges have been identified, such as barren plateaus \cite{cerezo2021barrenplateaus}, expressibility of the ansatz \cite{expressibility1_2021,expressibility2_2021_nakaji}, or reachability deficits \cite{reachabilitydeficits_QAOA_2020}. Furthermore, it has been noted that the presence of noise imposes severe limitations \cite{stilck2021limitations,stilck2023_limitations_transport,gonzalez2022_errors} on these algorithms. 

Following the recent quantum advantage experiments \cite{google2019quantumsupremacy,quantum-advantage-pan,kim_ibm2023evidence}, there has been widespread interest in the classical simulation of both noisy and noiseless quantum circuits \cite{Aharonov2023_paulipaths,tindall2023efficient,kechedzhi2024effective,shao2023simulating,liao2023simulation,fontana2023classical_LOWESA,rudolph2023classical_LOWESA,anand2023classical,Nemkov_2023,gao2018efficient_simulation,rajakumar2024_IQP,tanggara2024Paulipath}. Pauli path based algorithms are a family of algorithms that have been proposed to perform this task. They work by performing a Feynman path expansion in the Pauli basis (also known as Pauli path integral), and then truncating the Pauli operators with higher Hamming weight \cite{Aharonov2023_paulipaths,shao2023simulating,fontana2023classical_LOWESA,rudolph2023classical_LOWESA,Nemkov_2023,tanggara2024Paulipath}, thus keeping track of the local dynamics in the Pauli basis. This approach has been employed, for example, to simulate hydrodynamics efficiently \cite{vonKeyserlingk_2022_operator_backflow,rakovzky_2022_dissipation-assisted}. Aharonov et. al \cite{Aharonov2023_paulipaths} first used this strategy to successfully simulate sampling from depolarized random circuits under two assumptions --- gate-set orthogonality and anti-concentration \cite{Dalzell2022_anti-concentration}, both of which were satisfied for random circuits of depth $d=\Omega(\log n)$ \cite{Dalzell2022_anti-concentration}. Importantly, the models of random circuits that have been theoretically proposed for testing quantum advantage for a sampling task require anti-concentration \cite{bouland2019complexity}.  Consequently, this line of work identified stringent constraints on asymptotic quantum advantage in random circuit sampling task in the presence of depolarizing noise. Instead of sampling, one could also study the simpler problem of computing expectation values of local observables.  Performing an average-case analysis, the Pauli paths method can be shown to succeed with high probability for circuits sampled from a random-circuit ensemble that satisfies gate-set orthogonality but not anti-concentration. Recently, this has been exploited to practically simulate noisy variational circuits. In particular, in \cite{shao2023simulating,fontana2023classical_LOWESA} it is shown that the method succeeds to compute noisy expectation values with high probability for any circuit depth and constant error rate, when the optimization parameters are sampled randomly and independently.

While random circuits are a good model for theoretically understanding the efficacy of classical algorithms and are also relevant for current experiments \cite{google2019quantumsupremacy,quantum-advantage-pan,ZHU2022_advantagerandom}, several practically relevant computational tasks require circuits that are not random. For instance, many-body dynamics governed by time-independent Hamiltonian would in general be digitally simulated by repeating the same layer of unitaries again and again \cite{berry2015_Hsimulation,childs2021_trotter}. Such circuits become even less like random circuits if the translational invariance \cite{farrell2024_simu,barratt2021parallel,Mansuroglu_2023_variational_invariance} or special symmetries \cite{Maruyoshi_2023_conserved,vanicat2018_trotter_conserved} of the target many-body Hamiltonian are taken into account. Even  ground state problems for many-body problems arising in physics are often simulated variationally using translationally invariant circuits, which would again not satisfy the typically studied template of random circuits \cite{farrell2024_gs,Babbush_2018qsim,Keyserlingk2018_OTOC}. Furthermore, the response of typical circuits to noise might not always be representative of the behavior of individual instances of noisy circuits of practical interest. For example, important differences with regards to the spreading of noise have been identified. While typical circuits exhibit a fast propagation of errors \cite{gonzalez2022_errors,debmishra2024bounds} that is harmful to the computation, the presence of conservation laws in specific algorithms can suppress this proliferation of errors \cite{schiffer2024proliferation}. A better understanding of the performance of classical algorithms, such as that of the Pauli-path method, for noisy and non-random circuits would help identifying where a quantum advantage could be expected in practically relevant quantum simulation problems. 

% However, the average-case analysis generally breaks down when considering fixed circuits, since it requires averaging over random unitaries. As a consequence, the worst-case performance of the algorithm is yet not well explored.  In fact, many tasks might require non-random circuits. One example is the simulation of translation-invariant spin models, which has been shown to be classically hard even for single instances \cite{Gao_2017_single_instance}. 

%In particular, it is natural to ask whether one can provide simulability guarantees for fixed circuits that are nontrivial using the Pauli paths method. Furthermore, the reverse question is also underexplored: are there instances of circuits for which the Pauli paths method provably fails to provide a good approximation? The lack of results regarding worst-case performance motivates this paper, which attempts to shed light on these questions.

In this paper we further the theoretical understanding of the  Pauli path method to compute expectation values of depolarized quantum circuits, and focus on going beyond the average-case analysis of random circuits. We first study 2D circuits that are comprised of Clifford and $T$ gates in section \ref{section:T_gates}. We find that for this class of circuits there is a threshold error rate above which the simulation becomes easy. For circuits in which $T$ gates are uniformly distributed throughout the circuit, this threshold error rate depends on the fraction of $T$ gates present in the circuit. Importantly, this provides a sufficient condition for simulability. This contrasts with the average-case analysis, which only guarantees that the algorithm succeeds with high probability for randomly drawn circuits. As an application of this result, in section \ref{section:Ising_hamiltonian} we study 2D QAOA circuits that attempt to find low-energy states of classical Ising models on general graphs. There, our results shows that for hard instances of the problem which correspond to Ising model's graph being geometrically non-local, a QAOA algorithm mapped to a geometrically local circuit architecture using SWAP gates does not have any asymptotic advantage over classical algorithms if depolarized at a constant rate.

In section \ref{Section: counterexample}, we show by explicit construction that the Pauli-path method fails in the worst-case for all noise rates $p < 1-\sqrt{2/3}$. This has been hinted in different works \cite{gao2018efficient_simulation,rudolph2023classical_LOWESA,fontana2023classical_LOWESA} and it is expected that there exists a small subset of (possibly very structured) circuits where classical simulation fails. The circuit that we construct is remarkably simple, and can be trivially simulated using other methods. It is therefore not an example of a generally classically intractable circuit, but a simple example of the worst-case failure of Pauli path based algorithms.

\section{Summary of results}

We study the problem of computing expectation values of observables in the presence of local depolarizing noise of strength $p$, using the Pauli paths method. 
The algorithm follows the prescription in \cite{Aharonov2023_paulipaths,shao2023simulating,fontana2023classical_LOWESA}: the Pauli observable is evolved in the Heisenberg picture, and it is expressed in the Pauli basis after each layer of unitaries. This is sketched in Fig. \ref{fig:Pauli_paths}. As shown in \cite{Aharonov2023_paulipaths}, the contribution of the Pauli paths of weight $\ell=O(\log n)$ \footnote{Throughout the paper we will use the standard asymptotic notation from complexity theory that is detailed in appendix \ref{appendix:notation}.} can be computed efficiently with a classical algorithm. The high weight Pauli paths are discarded, since their contribution is damped by depolarizing noise.

\begin{figure*}
    \centering
    \includegraphics[scale=0.7]{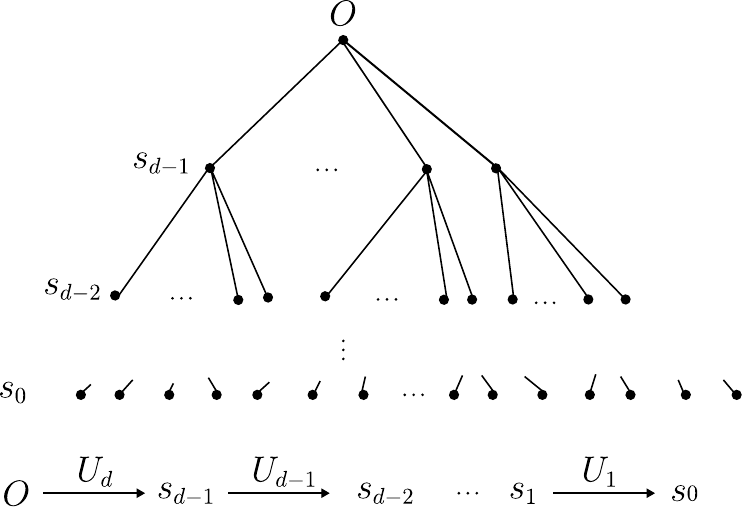}
    \caption{Graphical representation of the Pauli path method. The operator $O$ is evolved under Heisenberg evolution, expressing it in the Pauli basis after each layer. The process can be depicted as a tree: first $U_d$ is applied, and $U_d^{\dagger}OU_d$ can be written as a linear combination of Pauli strings, denoted as $s_{d-1}$ in the figure. Then, each of the $s_{d-1}$ will be evolved according to $U_{d-1}$, which will yield a superposition of different $s_{d-2}$. The process is then repeated until all the layers have been applied. We refer to a single branch of this tree as a Pauli path $s$, and it corresponds to a specific configuration of Pauli strings, $s=(s_d,s_{d-1},..,s_0)$. The sum over all possible Pauli paths yields the exact Heisenberg evolution of the operator $O$.}
    \label{fig:Pauli_paths}
\end{figure*}

\subsection{Prior work}

We are considering the problem of classically approximating expectation values of noisy quantum circuits in the presence of single qubit depolarizing noise. Due to an increase in the entropy of qubits due to the depolarizing noise, this task is trivial for circuit of depth $d=\omega(\log n)$, since in this the full quantum state of the qubits will be $1/n^{\omega(1)}-$close to the maximally mixed state $I/2^n$ \cite{aharonov1996limitations}. However, different techniques are needed to address circuits with depth $d=O(\log n)$. One of such techniques are the Pauli path based methods, which only keep track of the local dynamics in the Pauli basis.

Aharonov et al. \cite{Aharonov2023_paulipaths} first rigorously analyzed a Pauli path based polynomial time algorithm for the task of simulating noisy Random Circuit Sampling (RCS). In their work, they showed the enumeration of all Pauli paths up to a Pauli weight $\ell=O(\log n)$ can be done efficiently for circuits of depth $\Theta(\log n)$, with runtime $2^{O(\ell)}$. In their analysis they invoked two properties of random circuits: anti-concentration \cite{Dalzell2022_anti-concentration} and gate-set orthogonality. When both properties are present, they were able to provide a polynomial time algorithm that samples from the distribution of noisy quantum circuits within a small total variation distance. Specifically, computing all the paths up to a weight $\ell=O(\log n)$ is sufficient to provide an $O(1/\mathrm{poly}(n))$ approximation of the output probability distribution.

For the task of computing noisy expectation values of Pauli observables the situation is simpler, since anti-concentration is no longer required. In fact, as shown in \cite{shao2023simulating,fontana2023classical_LOWESA,rudolph2023classical_LOWESA}, the enumeration of the Pauli paths of weight $O(\log n)$ is efficient in all cases, regardless of the circuit depth or geometry. However, some randomness is still needed to control the error in the algorithm. For example, it is known that if a circuit is sampled from a distribution that satisfies gate-set orthogonality, without requiring anti-concentration, the method succeeds with high probability. In particular, the analysis in \cite{shao2023simulating,fontana2023classical_LOWESA} focuses on noisy variational circuits. They show that the classical method can successfully approximate expectation values for any constant noise rate and any circuit depth when averaging over the parameters of the circuits. That is, they perform an average-case analysis, and show that the algorithm succeeds for typical circuits with a fixed structure. These results are inline with previous works \cite{gao2018efficient_simulation}, which show that classical simulation of noisy circuits with a constant noise rate is efficient for almost all circuits, with the exception of a small subset of structured circuits. Finally, in \cite{mele2024noiseinduced} this is shown for more general noise models, providing a classical method to approximate quantum expectation values that succeeds for most circuits even in the presence of non-unital noise. This contrasts with the RCS task, since noisy random circuits under non-unital noise do not anti-concentrate, and therefore conventional techniques break down when it comes to sampling \cite{fefferman2023_effectnonunitalnoiserandom}.

The classical simulation of noiseless Clifford circuits is known to be efficient, following the celebrated Gottesman-Knill theorem \cite{gottesman1998knill_theorem,aaronson2004clifford}. The addition of $T$ gates to a Clifford gate-set yields a universal gate-set, and is not expected to be efficiently classically simulable. However, there are classical algorithms that require a run-time only exponential in the total number of $T$ gates (as opposed to the circuit depth or the number of qubits) to efficiently sample from Clifford+$T$ gates circuits \cite{bravyi2016_simulationTgates}. Furthermore, there exist several works addressing classical simulability of such circuits with noisy $T$ gates even when the Clifford gates are noiseless \cite{buhrman2006simulationTgates,rall2019simulation_T_gates_Pauli}. Precisely, Ref. \cite{buhrman2006simulationTgates} showed that a $T$ gate followed by sufficiently strong depolarizing noise ($p \gtrsim 0.45$) can be written as convex mixture of Clifford operations, which enables efficient classical simulation in Clifford + $T$ gates circuits, when the $T$ gates suffer from strong depolarizing noise. Similarly, Ref. \cite{rall2019simulation_T_gates_Pauli} analyzed Pauli back-propagation techniques, and showed efficient simulation for $T$ gates followed by depolarizing noise of rate $p >1-1/\sqrt{2} \simeq0.3$.

\subsection{Main results}

%\emph{Classical simulation of Clifford + $T$ gates circuits}:
\subsubsection*{Classical simulation of Clifford + $T$ gates circuits}

In section \ref{section:T_gates} we consider 2D circuits that are comprised of Clifford + $T$ gates. The addition of $T$ gates to the Clifford gate-set yields a universal gate-set \cite{nielsen2010quantum} that is often available in digital quantum computers \cite{murali2019nativegates}. Furthermore, we focus on circuits with a 2D topology since they are often used in experimental settings \cite{google2019quantumsupremacy,kim_ibm2023evidence}, although our analysis generalizes straightforwardly to circuits in higher dimensions. For Clifford circuits, the algorithm trivially succeeds since there is only one nonzero Pauli path, because Clifford unitaries map Pauli strings to Pauli strings. This is not surprising, since Clifford circuits are efficiently simulable classically \cite{aaronson2004clifford}.  In the absence of noise, circuits made out of Clifford + $T$ gates can be classically simulated in time that is exponential in the number of $T$ gates.

\begin{figure*}[h]
    \centering
    \includegraphics[scale=0.65]{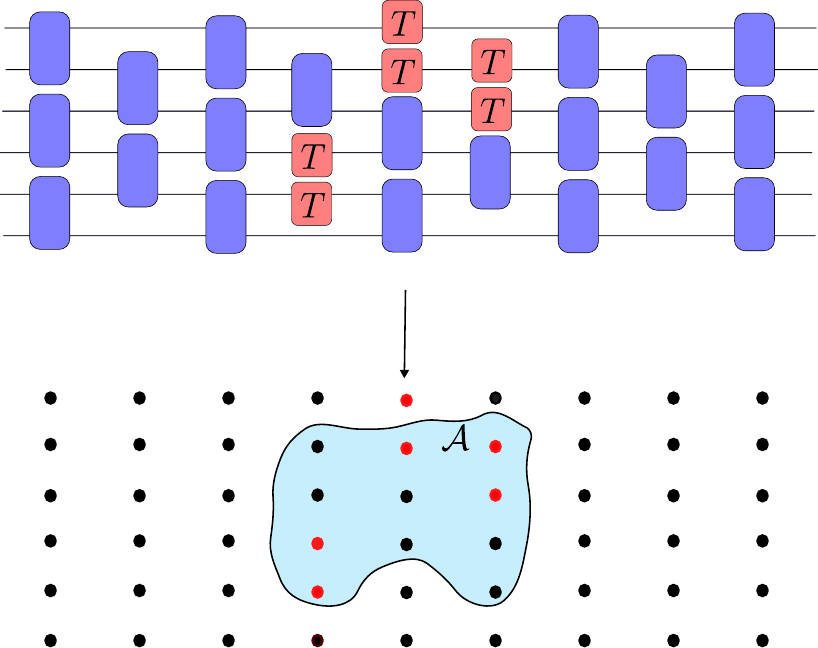}
    \caption{Sketch of the geometrical condition on the distribution of $T$ gates that we require for simulatibility. In the figure we represent a 1D circuit instead of a 2D one, for simplicity. In the upper half of the figure a circuit is depicted. The blue gates are Clifford gates, while the red gates represent $T$ gates. In the lower half of the figure the same circuit is represented, with the qubits represented as dots, while the gates are not depicted. A black dot indicates that a Clifford gate is applied to the qubit, and a red dot indicates that a $T$ gate is applied to the qubit. A subset $\mathcal{A}$ has been highlighted, containing a fraction of $T$ gates of $5/11$. For simulability, we require that all subsets of contiguous gates of size greater than $k \log n$ contain a fraction of $T$ gates no greater than $Q$, for some $k=\Theta(1)$. This is defined formally in Definition \ref{definition:sparseness}.}
\label{fig:sparsity}
\end{figure*}

We apply the Pauli paths method in this setting, and provide sufficient conditions for the simulability of such circuits.
Specifically, we find that for Clifford + $T$ gates circuits there is a threshold error rate above which the simulation becomes easy, which depends on the fraction of $T$ gates and their position in the circuit. To show this, we require that the $T$ gates are distributed sufficiently uniformly throughout the circuit. The precise geometric condition, sketched in Fig.~\ref{fig:sparsity}, is that for all set of $k$ contiguous gates, with $k>a \log n$ for some constant $a=\Theta(1)$, the fraction of $T$ gates in the set is no larger than a constant $Q$. 

\begin{mythm}{1}[T gates, informal]
Consider a 2D quantum circuit comprised of Clifford and $T$ gates such that for any set of contiguous set of gates larger than $a\log n$ for $a=\Theta(1)$, the fraction of $T$ gates contained in the set is no larger than $Q$. Then for any error rate $p>1-1/2^Q=\Theta(Q)$ as $Q \to 0$, the noisy expectation value of an observable $O$ given by a linear combination of at most $\mathrm{poly}(n)$ Pauli strings can be classically computed to precision $\epsilon \|O\|$ in $O(\mathrm{poly}(n,1/\epsilon))$ time.
\end{mythm}

This result is achieved by refining the enumeration of nonzero Pauli paths present in \cite{Aharonov2023_paulipaths,fontana2023classical_LOWESA,shao2023simulating} by using the fact Clifford gates map Pauli strings to Pauli strings and therefore cannot increase the number of nonzero Pauli paths. \noindent Note that, in the small $Q$ limit, the threshold condition reduces to $p>Q \log2$, which highlights the intuition behind this result: the classical simulation becomes easy if the rate at which errors are being introduced is greater than the rate at which quantum magic is introduced by the $T$ gates, since the Clifford gates introduce noise but no quantum magic. Hence, this results illustrates the ``competition'' between  quantum magic and noise. We remark that this ``competition'' has already appeared in previous works \cite{buhrman2006simulationTgates,rall2019simulation_T_gates_Pauli}, which already showed efficient classical simulability of noisy Clifford + $T$ gates circuits even with noiseless Clifford gates, provided that the $T$ gates are followed by depolarizing noise higher than a constant threshold. In contrast, our results indicate that it might be efficient to classically compute observables in circuits with an $o(1)$ fraction of $T-$gates at \emph{any rate of depolarizing noise}. 

We stress that we find the existence of a noise threshold above which classical simulation is efficient, which contrasts with previous average-case Pauli path analysis \cite{Aharonov2023_paulipaths,gao2018efficient_simulation,shao2023simulating,fontana2023classical_LOWESA,rudolph2023classical_LOWESA}, which manage to show efficient simulability on average cases for $\emph{any}$ constant noise rate. This highlights the fact that, while typical noisy circuits are provably efficiently simulable, there exists a small subset of structured circuits that are not expected to be simulable, for sufficiently small noise rates. Qualitatively, this is consistent with the threshold theorem \cite{aharonov1999faulttolerantquantum} from quantum error correction which indicates that below the fault-tolerance threshold, noisy circuits can be designed to simulate noiseless circuits and hence excludes the possibility of a classical algorithm to simulate \emph{all} circuit at \emph{any} noise rates. However, whether this qualitative expectation can be made rigorous remains unclear --- we remark that for the threshold theorem to apply, we need the ability to introduce fresh ancillas in $\ket{0}$ state in the middle of the circuit (or equivalently implement a restart operation). This is not included in the circuit family that we consider since we assume all the qubits involved in the circuit to experience depolarizing noise at every time-step from the start of the circuit. Finally, we also note that it is straightforward to generalize this result to circuits that contain non-Clifford gates other than $T$ gates, with only some constant prefactors changing.

\subsubsection*{Variational circuits}

As an application of the result above, in section \ref{section:Ising_hamiltonian} we study 2D variational circuits \cite{QAOA_review_2024} that solve classical combinatorial optimization problems. We specifically study circuits that attempt to find the ground energy of Ising Hamiltonians defined on a graph $G = (V, E)$ of bounded degree, with each vertex being an independent spin:
\[
H=\sum_{(i,j) \in E}J_{ij}Z_iZ_j+\sum_i b_i Z_i
\]
and where the couplings $J_{i, j}$ are $O(1)$ non-zero constants. The family of variational circuits that we study is sketched in Fig. \ref{fig:variational_ansatz} --- it comprises of $m$ layers of unitaries $L_1, L_2 \dots L_m$ where
\[
L_j = e^{-i\gamma_j H} e^{-i\alpha_j H_M},
\]
where $e^{-i\alpha_i H_M}$ is an $O(1)$-depth unitary that introduces entanglement in the qubits \cite{linghua2022adaptive} (e.g. in QAOA, $H_M = X_1 + X_2 + \dots X_n$) and $\alpha_i, \gamma_i$ are parameters of the variational circuits. Importantly, we assume that to implement $e^{-i \gamma_i H}$ on a 2D circuit geometry, we introduce layers of SWAP gates whenever $H$ itself is not 2D geometrically local. The geometry of the graph $G$ on which the Hamiltonian $H$ is defined plays an important role in our analysis: if the graph is planar (or almost planar), there are known classical efficient algorithms that can approximate the ground energy \cite{bansal2008classical_Ising}. On the other hand, for hard instances of the problem, the geometry of the graph is non-planar, and hence typically a large number of SWAP gates are needed to embed the graph in a 2D architecture. This places us in the regime were the circuit is dominated by Clifford gates, and according to our results above the output of the noisy can be approximated classically in polynomial time, for any constant noise rate:

\begin{figure*}[h]
    \centering
    \includegraphics[scale=0.85]{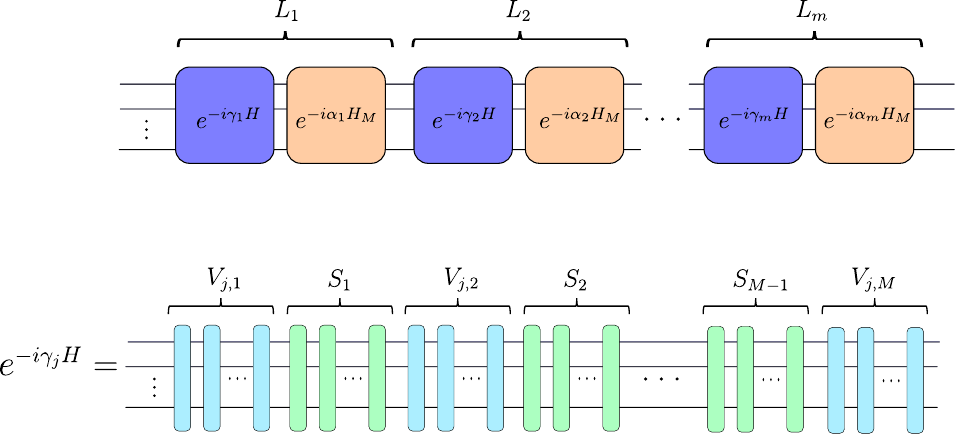}
    \caption{Graphical representation of the construction of the 2D variational circuits that we consider. The circuit is constructed by applying $m$ variational layers of the form $L_j=e^{-i \gamma_j H}e^{-i \alpha_j H_M}$, where $H$ is the objective Hamiltonian, $H_M$ is the mixer Hamiltonian, and $\gamma_j,\alpha_j$ are optimization parameters. The only assumption on $H_M$ is that the term $e^{-i \alpha_j H_M}$ can be implemented by a constant depth circuit. Furthermore, $e^{-i \gamma_j H}$ is implemented by applying computing layers $(V_{j,1} \dots V_{j,M})$, depicted in blue, and permuting layers $(S_{1} \dots S_{M-1})$, depicted in green. The computing layers contain Clifford and $T$ gates and implement the terms of the Hamiltonian $H$, $e^{-iZ_{i_1}Z_{i_2}}$ and $e^{-iZ_{i}}$, while the permuting layers allow for the embedding of the Hamiltonian graph into the 2D circuit architecture and contain only SWAP gates. This construction is very general and allows for the implementation of any bounded degree Ising Hamiltonian of the form of Eq. \ref{eq:ising_hamiltonian} with only a constant depth for the computing unitaries, $\sum_k  \mathrm{depth}(V_{j,k})=O(1) \forall j$. The depth of the permuting unitaries will depend on the geometry of the graph that needs to be embedded. }
    \label{fig:variational_ansatz}
\end{figure*}

\begin{mythm}{2}[Variational circuits, informal]
Consider a classical Ising Hamiltonian $H$ on a graph of constant degree, and a noisy variational circuit $\mathcal{C}$ for finding the ground energy of $H$ which, as described above, uses SWAP gates to implement a non-local $H$ with a geometrically local architecture. Then, there is either (a) a classical algorithm that approximates the ground energy of $H$ up to precision $\epsilon n$ in $O(n 2^{1/\epsilon^2})$ time, or (b) a classical algorithm that approximates the output energy of the noisy implementation of the circuit $\mathcal{C}$ up to precision $\epsilon n$ in $\mathrm{poly}(n,1/\epsilon)$ time for any constant noise rate $p$.
 \end{mythm}

We emphasize that this result relies on a specific embedding of the target Hamiltonian $H$ on the available geometrically local circuit architecture using SWAP gates, which is reasonable to assume in most applications. It does not apply to variational circuits that are not of this format i.e.~variational circuits that do not implement $\exp(-i\gamma_j H)$ in their ansatz. Furthermore, while we have assumed for simplicity that the only non-Clifford gates available are $T$ gates, the result can easily be extended to any other gate-set.

\subsubsection*{Worst-case performance}

In section \ref{Section: counterexample} we explore the worst-case limitations of the algorithm.  We denote by $\left<O\right>_{\ell}$ the output of the classical algorithm, which contains the contribution of all paths up to weight $\ell=\Theta(\log n)$, and by $\left<O\right>_{\mathrm{err}}$ the expected observable at the output of the noisy quantum circuit. We show that that in the most general case computing that Pauli paths with weight up to $\ell=\Omega(\log n)$ is not enough: $\ell$ must grow at least superlogarithmically i.e.~$\ell=\omega(\log n)$ to ensure that the error $\mathrm{Error}=\left|\left<O\right>_{\ell} - \left<O\right>_{\mathrm{err}} \right|$ does not grow exponentially with $\ell$. This implies that the classical algorithm needs a superpolynomial running time, $T=\omega(\mathrm{poly}(n))$:

\begin{mythm}{3}[No-go result, informal]
Consider the Pauli paths method to compute the expectation value of a convex combination of $g(n)=O(\mathrm{poly}(n))$Pauli observables $O=\sum_{k=1}^{g(n)} a_k O_k$, applied with a cut-off $\ell=\Omega(\log n)$. To compute the noisy expectation value $\left<O\right>_{\mathrm{err}}$ to any given precision, for all error rates and for any $O(\log n)$ depth circuit, it is necessary that the Pauli cut-off scales as $\ell=\omega(\log n)$.
\end{mythm}

To show this, we explicitly construct a circuit $\mathcal{C}$ for which naive application of the Pauli paths algorithm yields an error that grows exponentially with the Pauli weight cut-off $\ell$, $\mathrm{Error}=2^{\Omega(\ell)}/\left(\log n\right)^2=\Omega(\mathrm{poly}(n))/\left(\log n\right)^2$, when $\ell=\Theta(\log n)$, $\forall p < 1-\sqrt{2/3}$. We remark that the circuit is trivially easy to classically simulate, since it contains a single layer of unitaries. Furthermore, one could simply apply the algorithm to approximate each of the $O_k$ individually, which would trivially work. However, we argue that this simple example illustrates the limitations of the method: in the worst-case there can be an exponential growth of the error as a consequence of the truncation. This result is consistent with previous work suggesting that generic (random) noisy circuits are classically easy to simulate, but there is a small subset of circuits that cannot be simulated efficiently classically \cite{gao2018efficient_simulation}.

\subsubsection*{Root analysis}
Finally, in appendix \ref{section:root_analysis} we initiate a study of the trade-off between classical complexity and sensitivity to noise. To do this, we make use of the fact that, following the Pauli paths method, the noisy expectation value $\left<O\right>_{\mathrm{err}}(p)$ is written as a polynomial in the error rate $p$ (the precise construction of the polynomial is described in section \ref{section:setting}).
We then use the analytic theory of polynomials \cite{rahman2002analytic} to derive a relation between the roots of the polynomial $\left<O\right>_{\mathrm{err}}(p)$ and its behavior. Specifically, we find that if all the roots are real there is a trade-off between simulability and sensitivity to noise: $\left<O\right>_{\mathrm{err}}(p)$ can either be approximated efficiently classically, or is highly fragile to noise (for small noise rates).

Precisely, we consider Clifford + $T$ gates circuits, and a Pauli observable $O$, whose noisy expectation value $\left<O\right>_{\mathrm{err}}(p)$ can be expressed as a polynomial in  $p$ with all roots real. We further assume that $O$ is not vanishingly small in the noiseless setting, $\left<O\right>_{\mathrm{ideal}}=\Omega(1/\mathrm{poly}(n))$. Then, if there is no classical algorithm that can approximate the expectation value $\left<O\right>_{\mathrm{err}}(p)$ up to precision $\epsilon$ in time $T=O(\mathrm{poly}(n,1/\epsilon))$ $\forall p \in (0,1]$, we find the noisy expectation value must be vanishingly small for small constant noise rates:
\[
\left|\left<O\right>_{\mathrm{err}}(p)\right|=\frac{1}{\omega(\mathrm{poly}(n))}, \forall p \in \left(0,p^*\right],
\]
for some $p^{*}=\Theta(1)$, indicating that the circuit is highly fragile to noise.

We remark that the assumption that all the roots are real is not general, as one can provide circuits for which this does not hold. Furthermore, since computing the roots is as hard as computing the expectation value, it is not easy to characterize the subset of circuits with real roots. However, it is trivial to provide circuits that fulfill this condition: for example, all Clifford circuits will have all roots real. We pose as an open question whether a meaningful relation between the roots of the polynomial and the properties of the quantum circuit can be derived.

 \section{Setting and notation}\label{section:setting}

We consider a quantum circuit $\mathcal{C}$ consisting of $d$ layers of $D$-qubit (with $D=\Theta(1)$) gates $\mathcal{U}_i$, with an initial product state $\rho_0$. Mathematically, the noisy circuit is given by the quantum channel $\Phi_{\mathcal{C},p}^{\mathrm{noisy}} = \left[\bigcirc_{i=1}^d T_p \circ \mathcal{U}_i\right] \circ T_p$, where layers of noise $T_p$ are applied after every layer of unitaries, and also before the first layer. Specifically, each layer $T_p$ consists of applying single-qubit depolarizing noise of strength $p$ to every qubit, $T_p=\mathcal{N}_{p}^{\otimes n}$, with $\mathcal{N}_{p}(\rho)=(1-p)\rho + p I/2$.

We are interested in classically approximating the expectation value of a noisy observable $O$ up to absolute error $\epsilon \|O\|$. $O$ is given by a linear combination of at most $g(n)=O(\mathrm{poly}(n))$ Pauli observables, $O=\sum_{k=1}^{g(n)} a_k O_k$, where the coefficients $a_k$ are known. We will denote the noisy expectation value of $O$ by $\left<O\right>_\mathrm{err}=\mathrm{tr}(O\Phi_{\mathcal{C},p}^{\mathrm{noisy}}(\rho_0))$. In this expression, $\rho_0$ is the initial state, which is taken to be a product state.

The Pauli paths approach to solve the above problem follows the method in \cite{Aharonov2023_paulipaths}, and evolves the operator using in the Heisenberg evolution, expanding in the Pauli basis after each layer, and keeping track of the coefficients (this is represented in Fig. \ref{fig:Pauli_paths}). We denote a Pauli string by $s_i \in \{I,X,Y,Z\}^{\otimes n}$. A Pauli path is a sequence of $d+1$ Pauli strings, $s=(s_0,s_1, ... ,s_d)$. As done in \cite{Aharonov2023_paulipaths}, one can then express

\[
\left<O\right>_{\mathrm{err}}=\sum_s f(s)(1-p)^{|s|},
\]

\noindent with 

\begin{align}\label{eq:expression_Pauli_path}
f(s)=\frac{1}{2^{n(d+1)}}\mathrm{tr}\left(O s_d\right) \mathrm{tr}\left(s_d U_d s_{d-1} U_d^{\dagger} \right) ...  \mathrm{tr} \left( s_1 U_1 s_{0} U_1^{\dagger} \right) \mathrm{tr}\left(s_0 \rho_0 \right).
\end{align}
The term $\mathrm{tr}\left(s_k U_k s_{k-1} U_k^{\dagger} \right)$ can be understood as the transition amplitude of the circuit between $s_k$ and $s_{k-1}$, and the depolarizing noise causes each path $f(s)$ to decay proportionally to $(1-p)^{|s|}$, where $|s|=|s_0|+|s_1|+...+|s_d|$ is the Pauli (Hamming) weight of the path $s$. Furthermore, since the unitary $U_k$ is decomposed in $D-$qubit gates, with $D=\Theta(1)$, it can be further decomposed in transition amplitudes of $D-$qubit gates, each of which can he computed in $O(1)$ time. Therefore, $f(s)$ can be computed in $O(|s|)=O(nd)$ time for any Pauli path $s$.

We can then rewrite everything summing together all the paths of the same weight:

\begin{equation}\label{eq:polynomial_O}
\left<O\right>_{\mathrm{err}}=\sum_w F_w (1-p)^w,
\end{equation}

\noindent 
where
\begin{align} \label{eq:F_w}
F_w=\sum_{|s|=w} f(s).
\end{align}

\noindent The classical algorithm to estimate $\left<O\right>_{\mathrm{err}}$, works by truncating all the Pauli paths of a weight greater than a cut-off $\ell$. 

\[
\left<O\right>_{\ell}=\sum_{w \leq \ell} F_w (1-p)^w.
\]

\noindent The error will then be given by the truncated terms, and is tunable by the cut-off paramenter $\ell$:

\[
\mathrm{Error}=\left|\left<O\right>_{\ell}-\left<O\right>_{\mathrm{err}}\right| = \left|\sum_{w > \ell} F_w (1-p)^w\right|.
\]

\section{T gates} \label{section:T_gates}

In this section we will consider circuits that are comprised only of $T$ gates and Clifford gates. We first give a sufficient condition for simulability of 2D circuits that is based exclusively on the distribution of $T$ gates in the cicuit. This is stated in theorem \ref{theorem:T_gates}. As a corollary of this result, in subsection \ref{subsection:random_model} we study a random model where the $T$ gates are inserted randomly throughout the circuit. We find that the classical simulation is possible depending on the rate at which $T$ gates are introduced. This is stated in theorem \ref{theorem:random_T}. The randomness in the model serves the purpose of guaranteeing that the $T$ gates are distributed evenly throughout the circuit.

To understand this result, first consider a quantum circuit that is comprised only of Clifford gates. Since Clifford gates map Pauli strings to Pauli strings, performing the Heisenberg evolution of an observable $O$ is trivial: the application of every layer will yield a single Pauli string. Hence, there is only one nonzero Pauli path, implying that the circuit can be classically simulated. We now consider a quantum circuit with Clifford+$T$ gates. In this case, there will be more than one nonzero Pauli path, since a single $T$ gate can map a Pauli string to a linear combination of two Pauli strings. Hence, the number of Pauli paths that need to be computed will depend on the number of $T$ gates that are applied. Recall from Eq. (\ref{eq:F_w}) that the contribution to the noiseless expectation value $\left<O\right>_{\mathrm{ideal}}$ of the Pauli paths of weight $w$ is given by
\[
(1-p)^w \sum_{|s|=w} f(s).
\]
\noindent In general, the sum $\sum_{|s|=w} f(s)$ can be exponentially large.
However, if the number of nonzero Pauli paths is exponentially large, but grows slower than $(1/(1-p))^w$, the term $(1-p)^w$ will dominate, and it will be possible to truncate the contributions of Pauli paths of large weight, allowing us to classically approximate $\left<O\right>_{\mathrm{err}}$. Our key observation is that this will be true in circuits where the $T$ gates are only a small fraction of the total number of gates, if these $T$ gates are sufficiently evenly distributed throughout the circuits.

We begin by formalizing what we mean by sufficiently evenly distributed $T$ gates. For concreteness, we will analyze 2D quantum circuits, which is also the circuit topology adopted in several experiments \cite{google2019quantumsupremacy,kim_ibm2023evidence}, but our analysis can be extended to higher dimensions. We consider a 2D quantum circuit $\mathcal{C}$ of depth $d$, comprised of (2-qubit and 1-qubit) Clifford and $T$ gates. The $n$ qubits are labeled by their coordinates $(x,y)$ in 2D, and are arranged in a square lattice. The circuit is specified by the coordinates $(x,y,t)$: that is $(x,y,t)$ refers to the gate that is applied to the qubit $(x,y)$ at the time step $t$. On this geometry, We consider subsets of points $\mathcal{A} \subseteq \{(x,y,t) \in \mathbb{N}: x \leq \sqrt{n},y \leq \sqrt{n}, t\leq D\}$. We will define by $T(\mathcal{A})$ the number of $T$ gates that are acting on qubits included in $\mathcal{A}$. We will also define by $G_\mathcal{A}$ the associated graph to $\mathcal{A}$ that is obtained by inserting edges between every point of $\mathcal{A}$ and its next nearest neighbors. This allows us to define the condition of $(Q,k)$-sparseness in the $T$ gates as follows:

\begin{definition}\label{definition:sparseness}
A circuit $\mathcal{C}$ on a square lattice is $(Q,k)$-sparse in the $T$ gates if,  $\forall \mathcal{A}$ such that $|\mathcal{A}|\geq k$ and $G_\mathcal{A}$ is connected,  $T(\mathcal{A})/|\mathcal{A}| \leq Q$.

\end{definition}

The condition of $(Q,k)$-sparseness ensures that, for all sufficiently large sets of contiguous gates, the fraction of $T$ gates is no larger than $Q$. This will allow us to upper-bound the number of $T$ gates that any nonzero Pauli path of weight $w \geq k$ encounters. Consequently, this allows us to upper-bound the number of nonzero Pauli paths of a given weight, denoted by $N_w$, as a function of $Q$: 

\begin{lemma}[Counting paths]\label{lemma:counting_geometric}
Consider a quantum circuit $\mathcal{C}$ on a square lattice which is  $(Q,k)$-sparse in the $T$ gates, and a Pauli observable $O$. Then, the total number of Pauli paths of weight $k \leq w \leq \ell $ with nonzero contribution can be upper-bounded by $\sum_{w=k}^{\ell} N_w \leq 2^{Q\ell}$. Furthermore, when $\ell>k$, the nonzero Pauli paths of weight $w \leq \ell$ can be enumerated in time $2^{Q\ell}+2^{k}$.
\end{lemma}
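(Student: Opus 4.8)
The plan is to recast the enumeration of nonzero Pauli paths as counting the leaves of the branching tree of Fig.~\ref{fig:Pauli_paths}, and to control the number of branchings using $(Q,k)$-sparseness. The starting observation is that in the Heisenberg evolution only $T$ gates can create branchings: a Clifford gate maps a Pauli string to a single Pauli string, so the transition amplitude $\mathrm{tr}(s_k U_k s_{k-1}U_k^\dagger)$ in Eq.~\eqref{eq:expression_Pauli_path} is nonzero for exactly one choice of $s_{k-1}$ given $s_k$ on the support of that gate. A single-qubit $T$ gate, by contrast, sends $X \mapsto (X+Y)/\sqrt{2}$ and $Y\mapsto(Y-X)/\sqrt{2}$ while fixing $I$ and $Z$; hence it produces a genuine two-way branching only when the incoming Pauli string carries an $X$ or $Y$ on the qubit it acts on, and acts deterministically otherwise. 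Consequently, fixing the root $s_d$ to a single Pauli term of $O$, every nonzero Pauli path is uniquely specified by a binary choice at each ``active'' $T$ gate it traverses, and the number of nonzero paths of weight at most $\ell$ is at most $2^{B+1}$, where $B$ is the maximum number of active $T$ gates encountered along any single surviving path.

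The second ingredient bounds $B$ via sparseness. First I would establish that the spacetime support $\mathcal{A}(s) = \{(x,y,t): (s_t)_{x,y}\neq I\}$ of a nonzero path is connected in the next-nearest-neighbor graph $G_{\mathcal{A}}$. This follows because a nonidentity Pauli on an idle qubit is unchanged across a layer, while on the two qubits of a $2$-qubit gate the strings $s_{t-1}$ and $s_t$ must be locally compatible for the transition amplitude not to vanish; thus nonidentity weight can neither appear nor disappear in isolation but only propagate through gates, which link spacetime points that are adjacent in $G_{\mathcal{A}}$. Since each active $T$ gate sits at a distinct support point, the number of active $T$ gates on a path of weight $w$ is at most $T(\mathcal{A}(s)) \le |\mathcal{A}(s)| = w$. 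For a path of weight $k \le w \le \ell$ whose support is connected and of size $w \ge k$, Definition~\ref{definition:sparseness} upgrades this to $T(\mathcal{A}(s)) \le Q w \le Q\ell$, so $B \le Q\ell$ and $\sum_{w=k}^{\ell} N_w \le 2^{Q\ell}$ up to the constant factor above.

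For the enumeration statement I would split the tree traversal by weight. Paths of weight $w \ge k$ number at most $2^{Q\ell}$ and are produced directly by a depth-first walk of the branching tree, reading off the binary choice at each active $T$ gate. Paths of weight $w < k$ cannot be controlled by sparseness, but each such path carries at most $w<k$ support points, hence at most $k$ active $T$ gates and at most $2^k$ branchings; enumerating the low-weight part of the tree therefore costs at most $2^k$. Combining the two regimes gives the claimed running time $2^{Q\ell}+2^k$. When $O$ is a sum of $\mathrm{poly}(n)$ Pauli terms, or a term has disconnected support, I would run the argument per connected component of each term, folding the small (size $<k$) components into the $2^k$ contribution.

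The main obstacle is the structural connectivity claim and its interaction with sparseness: the cleanest form of the counting bound requires that sparseness apply to the support of every surviving path, which in turn needs the support to be connected (so that $|\mathcal{A}| \ge k$ triggers the $Q$-bound) and needs care when components of size below $k$ are present, since there sparseness gives nothing and the count must be absorbed into the $2^k$ term without spoiling the $2^{Q\ell}$ scaling. Verifying that the ``active'' $T$ gates that actually branch coincide with the $T$ gates counted by $T(\mathcal{A})$ up to the correct constant, and that the brick-wall geometry is faithfully captured by next-nearest-neighbor adjacency in the $(x,y,t)$ lattice, are the remaining points that need to be pinned down carefully.
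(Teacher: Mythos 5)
Your proposal is correct and takes essentially the same route as the paper's proof: both rest on the observations that Clifford gates never branch a Pauli string while a $T$ gate branches it at most two ways, that the spacetime support of any nonzero path must be connected, and that $(Q,k)$-sparseness therefore caps the number of $T$-gate encounters of a nonzero path of weight $k \le w \le \ell$ at $Qw \le Q\ell$, yielding the $2^{Q\ell}$ count and the $2^{Q\ell}+2^{k}$ enumeration time. The differences are presentational rather than substantive — you count leaves of the branching tree via binary choice sequences (which as written gives $2^{Q\ell+1}$; the spare factor of $2$ disappears if you note that the choice sequences of distinct surviving paths form a prefix-free set), whereas the paper telescopes nested sums over the sets $\mathcal{S}_{k-1}(s_k)$ — and you explicitly flag the disconnected-support/small-component subtlety that the paper's proof silently glosses over.
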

\begin{proof} 
We consider the set of different Pauli paths $s=(s_0,s_1,...,s_d)$, and would like to bound the number of nonzero paths with Pauli weight $k \leq w \leq \ell$. As defined in Eq.(\ref{eq:expression_Pauli_path}), $f(s)$ denotes the contribution from the Pauli path $s$. Here, $s_k \in \{I,X,Y,Z\}^{\otimes n}$ is the Pauli string at time $k$. The quantum circuit is defined on a square lattice, and the position of the qubits is labeled by the coordinates $(x,y)$. We will therefore denote by $s_k(x,y) \in \{I,X,Y,Z\}$ the operator of the Pauli string $s_k$ acting on the position $(x,y)$. Likewise, the quantum circuit is specified by the coordinates $(x,y,t)$: that is, a point $(x,y,t)$ refers to the gate that is applied to the qubit $(x,y)$ at the time step $t$.

Given the Pauli string $s_k$ at time $k$, we define a set $\mathcal{S}_{k-1}(s_k)$ containing all the different Pauli strings $s_{k-1}$ that have a nonzero contribution to the transition amplitude: 
\[
\mathcal{S}_{k-1}(s_k)=\bigg\{s_{k-1} \in \{I,X,Y,Z\}^{\otimes n} | \mathrm{tr}\left(U_k^{\dagger}s_kU_k s_{k-1}\right) \neq 0\bigg\}.
\]

\noindent By definition, from Eq. (\ref{eq:expression_Pauli_path}), the contribution $f(s)$ of a Pauli path with contains a string $s_k$ which is outside the set $\mathcal{S}_{k-1}(s_{k})$ is $f(s)=0$. Furthermore, since the observable $O$ itself is a Pauli observable, according to Eq. (\ref{eq:expression_Pauli_path}) it must be $s_d=O$, because otherwise one obtains $f(s)=0$.  In fact, this provides a necessary condition for a Pauli path to have nonzero contribution:

\[
f(s) \neq 0 \iff s_d=O, s_{d-1} \in \mathcal{S}_{d-1}(s_d), s_{d-2} \in \mathcal{S}_{d-2}(s_{d-1}),\dots, s_0 \in \mathcal{S}_0(s_1).
\]

\noindent Then, the number of nonzero paths of weight $w$ such that $k \leq w \leq \ell$ is given by the sum

\begin{align}\label{eq:Nw_sum}
\sum_{w=k}^{\ell}N_w= \sum_{\substack{s_d=O,s_{d-1} \in \mathcal{S}_{d-1}(s_d), ... ,s_0 \in \mathcal{S}_0(s_1) \\
                               k < |s_d|+...+|s_0| \leq \ell}}
          1.
\end{align}

\noindent One can notice that the Clifford gates cannot increase the number of nonzero Pauli paths since, by definition, the Clifford gates map Pauli strings to Pauli strings. That is, if the layer $U_k$ contains only Clifford gates, the set $\mathcal{S}_{k-1}(s_k)$ only contains one element, which is the Pauli string $U_k^{\dagger}s_kU_k$. On the other hand, a $T$ gate can branch a Pauli operator into two. Specifically, for a given $s_k$, the size of the set $\mathcal{S}_{k-1}(s_{k})$ will be bounded by

\begin{equation}\label{eq:size_sk}
\left|\mathcal{S}_{k-1}(s_{k})\right|=\sum_{s_{k-1} \in \mathcal{S}_{k-1}(s_k)}1 \leq 2^{T_k(s_k)},
\end{equation}
\noindent where $T_k(s_k)$ counts the number of $T$ gates in the layer $k$ of the quantum circuit that are applied on a position $(x,y)$ such that $s_k(x,y) \neq I$. That is $T_k(s_k)$ counts the number of $T$ gates that the Pauli path $s$ ``encounters'' at time $k$, where we define encountering as applying the $T$ gate to a Pauli operator, but not to the identity.

Then, from the sparseness condition one can deduce that, for any Pauli path of weight $w$ with $k \leq w \leq \ell$ and nonzero contribution:

\begin{align}\label{eq:T_gates_Q}
T_1(s_1)+ ... + T_d(s_d) \leq Q \left( |s_1| + ... + |s_d| \right) \leq Qw \leq Q \ell.
\end{align}

\noindent To see this, for a given Pauli path $s$ we will consider the set of points $\mathcal{A}(s)$ that contains the Pauli operators but not the identities.
\[
\mathcal{A}(s)=\{(x,y,t) | s_t(x,y) \neq I\}.
\]

\noindent We can then define the associated graph $G_{\mathcal{A}(s)}$, and realize that it must be connected. This is a consequence of the quantum gates of the circuit being local, and of the fact that a quantum gate cannot map a Pauli operator to the identity.  For example, We consider the case where $s_d(x_1,y_1)=X$ for some qubit $(x_1,y_1)$. If the layer $U_d$ of the quantum circuit applies a $1$-qubit gate at qubit $(x_1,y_1)$, then $s_{d-1}(x_1,y_1) \neq I \forall s_{d-1} \in \mathcal{S}_{d-1}(s_d)$. On the other hand, if a $2$-qubit gate is applied, since it must be local, there exists a qubit $(x_2,y_2)$ that is a nearest neighbor of $(x_1,y_1)$ such that $s_{d-1}(x_2,y_2) \neq I$. That is, $|x_2-x_1|+|y_2-y_1| \leq 1$. This is because a $2$-qubit gate can never map the operator $X \otimes R$ to $I \otimes I$ (where $R \in \{I,X,Y,Z\}$ ). Therefore, if point $(x_1,y_1,d) \in \mathcal{A}(s)$, and $s$ is a Pauli path with nonzero contribution, there is a point $(x_2,y_2,d-1) \in \mathcal{A}(s)$ that is a next nearest neighbor of $(x_1,y_1,d)$. Repeating this argument shows that $G_{\mathcal{A}(s)}$ must be connected in order to have a nonzero Pauli path. Then, because of the definition of $(Q,k)$-sparseness, one obtains that $T(\mathcal{A})\leq Q|\mathcal{A}|$, which is equivalent to Eq.(\ref{eq:T_gates_Q}).

That is, due to the sparseness of $T$ gates in the circuit, the fraction of $T$ gates that any Pauli path of weight $w \geq k$ will encounter will be no greater than $Q$, with the rest being Clifford gates. This allows us to bound the sum in Eq. (\ref{eq:Nw_sum}):

\begin{align*}
\sum_{w=k}^{\ell}N_w &= \sum_{\substack{s_d=O,s_{d-1} \in \mathcal{S}_{d-1}(s_d), ... ,s_0 \in \mathcal{S}_0(s_1) \\
                               k \leq |s_d|+...+|s_0| \leq \ell }}
          1 \\
&\overset{(1)}{\leq} \sum_{\substack{s_d=O,s_{d-1} \in \mathcal{S}_{d-1}(s_d), ... ,s_0 \in \mathcal{S}_0(s_1) \\
                               k \leq |s_d|+...+|s_0| \leq \ell}}
          2^{Q\ell -T_1(s_1)-T_2(s_2)- ... -T_d(O)} \\
  & \overset{(2)}{\leq} \sum_{\mathcal{S}_{d-1}(O)} ... \sum_{\mathcal{S}_{k}(s_{k+1})} ...  \sum_{\mathcal{S}_{1}(s_2)} \sum_{\mathcal{S}_{0}(s_1)} 2^{Q \ell-T_1(s_1)-T_2(s_2)...-T_d(O)}\\
& \overset{(3)}{\leq} \sum_{\mathcal{S}_{d-1}(O)} ... \sum_{\mathcal{S}_{k}(s_{k+1})} ...  \sum_{\mathcal{S}_{1}(s_2)}2^{Q\ell-T_2(s_2)-...-T_d(O)} \\
& \vdots \\
& \overset{(3)}{\leq} \sum_{\mathcal{S}_{d-1}(O)}2^{Q\ell-T_d(O)} \leq 2^{Q\ell},
\end{align*}

\noindent where in $(1)$ we have used Eq.(\ref{eq:T_gates_Q}), in $(2)$ we have dropped the restriction $k \leq |s_d|+...+|s_0| \leq \ell$ in the sum, and in $(3)$ we have (repeateadly) used Eq. (\ref{eq:size_sk}).

Furthermore, the nonzero paths can be enumerated in a straightforward way: for a given $s_{k}$, the elements of $\mathcal{S}_{k-1}$ can be obtained by just applying the relevant $T$ gates (which yield at most $2^{T_k(s_{k})}$ configurations), and Clifford gates (which only yield one configuration). The paths with weight $w \leq k $ can therefore be enumerated in time $2^{k}$, while the paths with weight $k \leq w \leq \ell$ can be enumerated in time $2^{Q \ell}$ following the argument above. This yields $2^{k}+ 2^{Q\ell}$ time in total. \end{proof}

As a consequence of the refined counting of the paths, one can show that the algorithm succeeds for a sufficiently high error rate:

\begin{theorem}\label{theorem:T_gates}
 Consider a 2D quantum circuit comprised of Clifford and $T$ gates that is $(Q,a \log n)$-sparse in the $T$ gates for some constant $a=\Theta(1)$, an initial product state $\rho_0$, and an observable $O=\sum_{k=1}^{g(n)} a_k O_k$ given by a linear combination of at most $g(n)=O(\mathrm{poly}(n))$ Pauli observables, where the $a_k$ and $O_k$ are known. Then, there exists a classical algorithm that can estimate the noisy expectation value $\left<O\right>_{\mathrm{err}}$ up to precision $\epsilon \|O\|$ in time $T=O\left(\mathrm{poly}\left(n,1/\epsilon \right)\right)$, $\forall p>1-1/2^Q=\Theta(Q)$ as $Q \to 0$. Specifically, for sufficiently small $p,Q$, the time complexity is 
\[
T=g(n) \times O\left[\max \left(  \left(\left(ng(n)d/\epsilon\right)^{\frac{Q}{p -Q \log 2}} + n^{a \log 2}\right)\times \frac{\log (ng(n)d/\epsilon)}{p - Q\log 2},n^{a \log 2} \log n\right)\right].
\]

\end{theorem}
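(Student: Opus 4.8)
The plan is to reduce to a single Pauli observable and then control the truncation error as a geometric series, using Lemma~\ref{lemma:counting_geometric} to count the surviving paths. First, by linearity of both $\langle\cdot\rangle_{\mathrm{err}}$ and of the truncated estimate I would write $\langle O\rangle_\ell - \langle O\rangle_{\mathrm{err}} = \sum_{k=1}^{g(n)} a_k\bigl(\langle O_k\rangle_\ell - \langle O_k\rangle_{\mathrm{err}}\bigr)$, so it suffices to estimate each single-Pauli expectation value $\langle O_k\rangle_{\mathrm{err}}$ to an appropriate per-term precision and pay a factor $g(n)$ in the runtime. For a fixed Pauli $O_k$ the top Pauli string $s_d=O_k$ is forced, and I would first record the elementary bound $|f(s)|\le 1$ for every path: in Eq.~(\ref{eq:expression_Pauli_path}) each of the $d$ transition amplitudes and the factor $\mathrm{tr}(O_k s_d)$ is at most $2^n$ in magnitude while $|\mathrm{tr}(s_0\rho_0)|\le 1$, and these exactly cancel the prefactor $2^{-n(d+1)}$. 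Consequently $|F_w|\le N_w$, the number of nonzero paths of weight $w$.

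The key step is the tail estimate. Applying Lemma~\ref{lemma:counting_geometric} with cutoff $w$ (for any $w\ge k:=a\log n$) gives $N_w \le \sum_{w'=k}^{w} N_{w'} \le 2^{Qw}$. Hence for any cutoff $\ell\ge k$ the truncation error of a single term obeys
\[
\bigl|\langle O_k\rangle_\ell - \langle O_k\rangle_{\mathrm{err}}\bigr| \le \sum_{w>\ell} N_w (1-p)^w \le \sum_{w>\ell}\bigl(2^Q(1-p)\bigr)^w = \frac{\bigl(2^Q(1-p)\bigr)^{\ell+1}}{1-2^Q(1-p)} .
\]
This geometric sum converges and decays in $\ell$ \emph{exactly} when $2^Q(1-p)<1$, i.e.\ when $p>1-1/2^Q$, which is precisely the claimed threshold; note $1-1/2^Q \to Q\log 2$ as $Q\to0$, giving the $\Theta(Q)$ scaling. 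Writing the decay rate as $\log\!\bigl(1/(2^Q(1-p))\bigr) = -Q\log2 - \log(1-p)$, for small $p$ this is $\approx p - Q\log2 > 0$, the quantity appearing in the denominator of the stated runtime.

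I would then choose the cutoff $\ell = \Theta\!\bigl(\log(ng(n)d/\epsilon)/(p-Q\log2)\bigr)$, large enough that the per-term error falls below a normalized budget of order $\epsilon\|O\|/\sum_k|a_k|$, so that summing the $g(n)$ contributions yields total error $\le\epsilon\|O\|$; since the decay rate is $\Theta(p-Q\log2)$, this gives $\ell=\Theta(\log n)$ whenever $\epsilon=1/\mathrm{poly}(n)$ and $g(n),d=\mathrm{poly}(n)$, confirming that an $\Omega(\log n)$ cutoff already suffices (and that $\ell>k$ holds since $p-Q\log2$ is a constant). For the runtime, Lemma~\ref{lemma:counting_geometric} enumerates all nonzero paths of weight $\le\ell$ in time $2^{Q\ell}+2^k$; here $2^{Q\ell}$ is $\mathrm{poly}(n,1/\epsilon)$, producing the $(ng(n)d/\epsilon)^{\Theta(Q/(p-Q\log2))}$ factor, and $2^k = 2^{a\log n} = n^{a\log2}$, both polynomial since $a,Q=\Theta(1)$ and $p-Q\log2=\Theta(1)$. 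Each coefficient $f(s)$ is evaluated in $O(nd)$ time, and the procedure is repeated over the $g(n)$ terms, assembling into the stated complexity.

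The main obstacle I expect is the bookkeeping rather than the mechanism: handling the low-weight $w<k$ paths, which Lemma~\ref{lemma:counting_geometric} does not bound by $2^{Qw}$ but which are enumerated exhaustively in $2^k=n^{a\log2}=\mathrm{poly}(n)$ time and retained exactly; and propagating the per-term precision $\epsilon\|O\|/\sum_k|a_k|$ through the logarithm so that the $\mathrm{poly}(n)$ quantities $n,g(n),d,1/\epsilon$ land inside $\log(\cdot)$ and hence inside the polynomial runtime. I would also stress that no finer control of the signed coefficients $F_w$ is needed: it is exactly the crude bound $|f(s)|\le1$, together with the path count $N_w\le 2^{Qw}$ and the threshold $p>1-1/2^Q$, that makes the tail summable, and improving the estimate on $F_w$ is unnecessary.
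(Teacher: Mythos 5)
Your proposal follows essentially the same route as the paper's proof: reduce to a single Pauli observable, bound $|f(s)|\le 1$ by unitarity, invoke Lemma~\ref{lemma:counting_geometric} to get $N_w\le 2^{Qw}$ for $w\ge a\log n$, bound the tail as a geometric series under the condition $2^Q(1-p)<1$ (equivalently $p>1-2^{-Q}$), choose $\ell=\Theta\bigl(\log(ng(n)d/\epsilon)/(p-Q\log 2)\bigr)$, and multiply the single-observable runtime by $g(n)$. The single-Pauli part of your argument is correct and matches the paper, including the treatment of the low-weight ($w<a\log n$) paths by exhaustive enumeration in time $2^{a\log n}=n^{a\log 2}$.

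The one genuine gap is in the multi-observable step, which you dismiss as bookkeeping. You set the per-term precision budget to $\epsilon\|O\|/\sum_k|a_k|$, but the runtime is \emph{polynomial}, not logarithmic, in the inverse per-term precision (through the factor $2^{Q\ell}=\bigl(nd/\epsilon'\bigr)^{Q/(p-Q\log 2)}$), so you must prove that $\sum_k|a_k|/\|O\|$ is polynomially bounded; otherwise the claimed $T=O(\mathrm{poly}(n,1/\epsilon))$ does not follow for arbitrary known coefficients $a_k$, since nothing you wrote rules out $\|O\|$ being superpolynomially small compared to $\sum_k|a_k|$. This is exactly the step the paper supplies: by orthogonality of Pauli strings under the Hilbert--Schmidt inner product, $\|O\|\ge \|O\|_F/\sqrt{2^n}=\sqrt{\sum_k a_k^2}$, and then Cauchy--Schwarz gives $\sum_k|a_k|\le\sqrt{g(n)}\sqrt{\sum_k a_k^2}\le\sqrt{g(n)}\,\|O\|$, so the per-term budget $\epsilon'=\epsilon/\sqrt{g(n)}$ suffices and stays inverse-polynomial. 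Adding this inequality (two lines) closes the gap, and the rest of your argument then assembles into the stated complexity. A smaller point: your parenthetical claim that $\ell>a\log n$ holds automatically ``since $p-Q\log 2$ is a constant'' is not justified when $\epsilon$ is constant and $d$ is small; simply define $\ell$ as the maximum of the two quantities, as the paper does and as the $n^{a\log 2}\log n$ term in the stated runtime already anticipates.
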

\begin{proof} 
We can first assume that the observable $O$ is a single Pauli string, since the the result for a linear combination of Pauli string follows can then be obtained by simply applying the algorithm to each of the $O_k$.
From Lemma \ref{lemma:counting_geometric}, one obtains that the number of Pauli paths of weight $w$ is bounded by $N_w \leq 2^{Qw}$, when $w \geq a \log n$. Therefore, since $\left|f(s)\right| \leq 1 \forall s$ due to unitarity, one gets $\left|F_w\right| \leq N_w$, and:

\[
\left| \sum_{w>\ell} F_w (1-p)^w \right| \leq \sum_{w>\ell} \left| N_w\right| (1-p)^w  \leq \sum_{w=\ell}^{nd} 2^{wQ}(1-p)^w \leq nd\left[2^Q(1-p)\right]^\ell \leq \epsilon.
\]

\noindent To fulfill this condition it suffices to pick a cut-off

\[
\ell=\max\left(\frac{\log (nd/ \epsilon)}{\log \left[1/\left(2^Q(1-p)\right)\right]},a \log n\right).
\]

\noindent Since any $f(s)$ can be computed in $O(|s|)$ time, the runtime would be 

\[
T=O\left[\ell \times \left(2^{Q \ell}+n^{a \log 2}\right)\right]=O\left(\mathrm{poly}\left(n,1/\epsilon \right)\right).
\]

\noindent Specifically, if $p \ll 1$ the runtime is
\[
T=O\left[\max \left(  \left(\left(nd/\epsilon\right)^{\frac{Q}{p -Q \log 2}} + n^{a \log 2}\right)\times \frac{\log (nd/\epsilon)}{p - Q\log 2},n^{a \log 2} \log n\right)\right].
\]
We now consider that $O$ is given by a linear combination of $g(n)=O(\mathrm{poly(n)})$ Pauli observables $O_k$, $O=\sum_{k=1}^{g(n)}a_k O_k$. Using the method above, one can approximate each of the $O_k$ individually up to precision $\epsilon^{\prime}$. That is, the classical algorithm computes $\left<O_k\right>_{\ell}$ such that $\left|\left<O_k\right>_{\ell}-\left<O_k\right>_{\mathrm{err}}\right| \leq \epsilon^{\prime}$. The classical algorithm then outputs $\left<O\right>_{\ell}=\sum_k a_k \left<O_k\right>_{\ell}$. The operator norm $\|O\|$ can be lower bounded using

\[
\|O\| \geq \|O\|_F/\sqrt{2^n} =\sqrt{\sum_k a_k^2}.
\]

\noindent And the total error can be upper bounded by
\[
\left| \left<O\right>_{\mathrm{err}}-\left<O\right>_{\ell}\right| \leq \sum_{k=1}^{g(n)}|a_k| \epsilon^{\prime} \leq \epsilon^{\prime} \sqrt{g(n)}\sqrt{\sum_k a_k^2} \leq \epsilon^{\prime} \sqrt{g(n)}\|O\|,
\]

\noindent One can then pick $\epsilon^{\prime}=\epsilon/\sqrt{g(n)}$, yielding a runtime:

\[
T=g(n) \times O\left[\max \left(  \left(\left(ng(n)d/\epsilon\right)^{\frac{Q}{p -Q \log 2}} + n^{a \log 2}\right)\times \frac{\log (ng(n)d/\epsilon)}{p - Q\log 2},n^{a \log 2} \log n\right)\right].
\]

\end{proof}
For reference, we can compare this result to a brute-force simulation in the Pauli basis which computes all the Pauli paths. For a 2D circuit that is also $(Q,k)-$sparse, from Lemma \ref{lemma:counting_geometric} it follows that the total number of Pauli paths of Pauli weight less or equal than $\ell$ can be enumerated in time $2^{Q \ell}+2^k$. Furthermore, due to geometric locality constraints, for an initial Pauli observable of $O(1)$ Pauli weight, the maximum weight of a Pauli path is $\ell=O(d^{3})$, where $d$ is the circuit depth. Hence, even in the noiseless case, the expectation value of $O$ can be computed in $2^{QO(d^{3})}$ time. Hence, for small $Q$, the classical simulation becomes easier due to the smaller number of $T$ gates. However, for constant $Q$, this runtime is superpolynomial for sufficiently large circuit depth,  $d=\omega\left(\sqrt{\log n}\right)$. Consequently, Theorem \ref{theorem:T_gates} shows that the presence of noise allows us to achieve a superpolynomially better performance with the Pauli-path method compared to the noiseless case where we would essentially have to keep track of all the Pauli paths.

\subsection{Random model}\label{subsection:random_model}

Here, we study a random model that introduces $T$ gates randomly throughout the circuit. Since the $T$ gates are introduced randomly, they will be evenly distributed throughout the circuit, which relaxes the geometric conditions for simulability that were detailed above. As a consequence, for this random model, the threshold error rate above which the simulation becomes easy will depend only on the rate at which $T$ gates are introduced in the circuit.

\begin{definition}[Random model] \label{random_model}
We consider circuits of a fixed architecture, and a gate set that contains only Clifford (1-qubit and 2-qubit) gates and $T$ gates. The circuits are generated according to the following rule: In every position of the architecture, one can choose which gate to place. In the $2$-qubit gate positions of the architecture, any Clifford gate can be placed. In the $1$-qubit gate positions, either a Clifford or a $T$ gate can be placed. However, this is done randomly: with probability $Q$ one can pick any gate from the gate-set, and with probability $1-Q$ one can only pick a Clifford gate. 
\end{definition}

In the random model defined above the parameter $Q$ controls the number of $T$ gates that can be introduced on average. We remark that this model allows significant freedom to create circuits, since it only imposes an upper bound on the probability that a given gate is a $T$ gate, but otherwise it leaves freedom to choose which gates to apply to each qubit. Therefore, its behavior will deviate significantly from other random distributions (e.g. Haar random circuits) that are typically studied \cite{Aharonov2023_paulipaths,haah2018_operatorspreading,bouland2019complexity}. For example, the circuits sampled from this distribution do not satisfy gate-set orthogonality. Recall that a random distribution $\mathcal{D}$ is said to satisfy gate-set orthogonality if, for circuits sampled from $\mathcal{D}$, the average over the product of two different Pauli paths is zero \cite{Aharonov2023_paulipaths} i.e.
\[
\underset{\mathcal{C} \sim \mathcal{D}_Q}{\mathds{E}}\left[f(s)f(s^{\prime})\right]=0 \text{ for } s\neq s^{\prime}.
\]
To see that our model (Definition \ref{random_model}) does not satisfy gate-set orthogonality, it is enough to consider the following setting: we consider a single qubit in the state $(\ket{+Y}+\ket{+})/\sqrt{3}$, with $\ket{+Y}=(\ket{0}+i \ket{1})/\sqrt{2}$ (i.e.~the $+1$ eigenstate of $Y$). With probability $Q$ a $T$ gate is applied, while with probability $1-Q$ the identity is applied. Finally, the observable $X$ is measured. If the identity is applied, then the only nonzero Pauli path is $s=(X,X)$, which we denote as $f((X,X),I)=2/3$. On the other hand, since $T^{\dagger}XT=(X-Y)/\sqrt{2}$, if the $T$ gate is applied there are two nonzero Pauli paths: $s=(X,X)$  and $s=(Y,X)$, denoted as $f((X,X),T)=\sqrt{2}/3$ and $f((Y,X),T)=-\sqrt{2}/3$. Therefore, if $s=(X,X)$ and $s^{\prime}=(Y,X)$ one gets
\begin{equation}\label{eq:orthogonality}
\mathds{E}_{\mathcal{C}} f(s)f(s^{\prime})=(1-Q)f(s,I)f(s^{\prime},I) +Qf(s,T)f(s^{\prime},T)=-2Q/9,
\end{equation}
implying that the gate-set orthogonality is violated.

Despite not satisfying gate-set orthogonality, one can still upper-bound the number of nonzero paths of a given Pauli weight $w$ as a function of $Q$:

\begin{lemma} [Counting paths]\label{lemma:counting_averaged}

Consider a quantum circuit $\mathcal{C}$ drawn from a distribution $\mathcal{D}_Q$ constructed according to Definition \ref{random_model} with parameter $Q$. Then, the total number of nonzero Pauli paths of weight $w \leq \ell $  with nonzero contribution, on average, can be upper-bounded by
\begin{align}
\underset{\mathcal{C} \sim \mathcal{D}_Q}{\mathds{E}} \sum_{w\leq \ell}N_w \leq \left(1+Q\right)^w.
\end{align}
\end{lemma}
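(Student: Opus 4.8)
The plan is to mirror the deterministic counting of Lemma \ref{lemma:counting_geometric}, but since the random model provides no hard cap on the number of $T$ gates a path encounters, I would replace the worst-case branching factor $2$ by its expectation $1+Q$ and average over the circuit. The genuine difficulty is that the weight truncation $w\le\ell$ is coupled to the branching in a way that resists a naive layer-by-layer expectation: Clifford gates do not preserve Pauli weight, so $|s|$ cannot be tracked cleanly through the backward recursion while simultaneously counting branches. I would sidestep this by passing to a generating function in the weight.

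First I would remove the truncation constraint. For any nonzero path with $|s|\le\ell$ one has $1\le(1+Q)^{\ell-|s|}$, and for $|s|>\ell$ the right-hand side is still nonnegative, so
\begin{equation}
\sum_{w\le\ell}N_w=\sum_{\substack{\text{nonzero }s\\|s|\le\ell}}1\le(1+Q)^{\ell}\sum_{\text{nonzero }s}x^{|s|}=:(1+Q)^{\ell}Z(x),\qquad x=\frac{1}{1+Q}.
\end{equation}
Hence it suffices to prove the averaged, $\ell$-free bound $\mathds{E}_{\mathcal{C}\sim\mathcal{D}_Q}[Z(x)]\le1$ evaluated at $x=1/(1+Q)$.

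To establish this I would evaluate $Z(x)$ by a backward recursion over the layers. For a Pauli string $P$ define $h_k(P)$ as the weighted sum $\sum x^{|P|+|s_{k-1}|+\cdots+|s_0|}$ over all nonzero backward completions $s_{k-1}\in\mathcal{S}_{k-1}(P),\dots,s_0\in\mathcal{S}_0(s_1)$, so that $Z(x)=h_d(O)$ and $h_k(P)=x^{|P|}\sum_{s_{k-1}\in\mathcal{S}_{k-1}(P)}h_{k-1}(s_{k-1})$ with $h_0(P)=x^{|P|}$. The crucial structural point is that $\mathcal{S}_{k-1}(P)$ depends only on the gates of layer $k$, whereas $h_{k-1}$ depends only on layers $1,\dots,k-1$; under Definition \ref{random_model} these gate choices are independent. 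I would then show by induction that $\mathds{E}[h_k(P)]\le1$ uniformly in $P$: the base case is $x^{|P|}\le1$, and in the inductive step, pulling out the uniform bound and invoking independence gives $\mathds{E}[h_k(P)]\le x^{|P|}\,\mathds{E}[|\mathcal{S}_{k-1}(P)|]$. Finally, $|\mathcal{S}_{k-1}(P)|$ factorizes into independent per-site branching factors that equal $2$ exactly when a single-qubit site carries a non-identity Pauli and hosts a $T$ gate (probability $Q$) and $1$ otherwise, so $\mathds{E}[|\mathcal{S}_{k-1}(P)|]\le(1+Q)^{|P|}$. With $x=1/(1+Q)$ the two factors cancel, $x^{|P|}(1+Q)^{|P|}=1$, closing the induction and yielding $\mathds{E}[Z(x)]\le1$, hence $\mathds{E}\sum_{w\le\ell}N_w\le(1+Q)^{\ell}$.

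The step I expect to be the real crux is the decoupling: justifying that the weight-truncated count can be absorbed into the generating function $Z(1/(1+Q))$, and that the layerwise factorization of the expectation is legitimate because the per-layer random gate placements are mutually independent. Once independence across layers and the per-site expected branching factor $1+Q$ are in hand, the cancellation is immediate, so the care lies in the reformulation rather than in the final estimate.
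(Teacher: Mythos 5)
Your proposal is correct and takes essentially the same route as the paper: the key steps — bounding $1 \le (1+Q)^{\ell-|s|}$ to decouple the weight truncation, the per-layer expected branching bound $\mathds{E}\left|\mathcal{S}_{k-1}(P)\right| \le (1+Q)^{|P|}$ obtained from independent per-site $T$-gate probabilities $\le Q$, and the layer-by-layer peeling that cancels $(1+Q)^{-|s_k|}$ against the expected branching — are exactly those of the paper's proof, which writes the peeling as an explicit chain of inequalities rather than as your generating-function induction $\mathds{E}[h_k(P)]\le 1$. The difference is purely presentational.
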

\noindent Furthermore, the paths of weight $w \leq \ell$ can be enumerated in time $\left(1+Q\right)^\ell$, on average.
\begin{proof} 

This can be shown adapting the proof of Lemma \ref{lemma:counting_geometric}. One wants to average over the sum

\begin{align}
\mathds{E}_{\mathcal{C}} \sum_{w \leq \ell} N_{w}= \mathds{E}_{\mathcal{C}} \sum_{\substack{s_d=O,s_{d-1} \in \mathcal{S}_{d-1}(s_d), ... ,s_0 \in \mathcal{S}_0(s_1) \\
                               |s_d|+...+|s_0| \leq \ell}}
          1. 
\end{align}

\noindent In this case, we notice that, averaging over Eq. (\ref{eq:size_sk})

\[
\mathds{E}_{\mathcal{C}}\left|\mathcal{S}_{k-1}(s_{k})\right| \leq \mathds{E}_{\mathcal{C}} 2^{T_k(s_k)}.
\]

\noindent To perform the average, we realize that, for circuits sampled from the distribution $\mathcal{D}_Q$, by definition, the probability that a gate at a given position is a $T$ gate is upper bounded by $Q$. Therefore:

\begin{align}\label{eq:relation_sk_average}
\mathds{E}_{\mathcal{C}} 2^{T_k(s_k)} \leq \sum_{j=0}^{|s_{k}|} Q^j(1-Q)^{|s_{k}|-j} {|s_{k}| \choose j} = (1+Q)^{|s_{k}|}.
\end{align}

\noindent And hence

\begin{align} \label{eq:size_random}
\mathds{E}_{\mathcal{C}} \sum_{s_{k-1} \in \mathcal{S}_{k-1}(s_{k})}1 \leq (1+Q)^{|s_k|}.
\end{align}

\noindent Following the rest of the argument in Lemma \ref{lemma:counting_geometric} with Eq. \ref{eq:relation_sk_average} proves the Lemma:

\begin{align*}
\mathds{E}_{\mathcal{C}}  \sum_{w \leq \ell} N_{w} &= \mathds{E}_{\mathcal{C}} 
 \sum_{\substack{s_d=O,s_{d-1} \in \mathcal{S}_{d-1}(s_d), ... ,s_0 \in \mathcal{S}_0(s_1) \\
                               |s_d|+...+|s_0|\leq \ell}}
          1 \\
 &\overset{(1)}{\leq} \mathds{E}_{\mathcal{C}} 
 \sum_{\substack{s_d=O,s_{d-1} \in \mathcal{S}_{d-1}(s_d), ... ,s_0 \in \mathcal{S}_0(s_1) \\
                               |s_d|+...+|s_0| \leq \ell}}
          (1+Q)^{\ell-|s_1|-|s_2|- ... -|s_d|} \\
 & \overset{(2)}{\leq} \mathds{E}_{\mathcal{C}} 
 \sum_{\mathcal{S}_{d-1}(O)} ... \sum_{\mathcal{S}_{k}(s_{k+1})} ...  \sum_{\mathcal{S}_{1}(s_2)} \sum_{\mathcal{S}_{0}(s_1)} (1+Q)^{\ell-|s_1|-|s_2|...-|O|)}\\
 & \overset{(3)}{\leq} \mathds{E}_{\mathcal{C}} 
 \sum_{\mathcal{S}_{d-1}(O)} ... \sum_{\mathcal{S}_{k}(s_{k+1})} ...  \sum_{\mathcal{S}_{1}(s_2)}(1+Q)^{\ell-|s_2|-...-|O|} \\
 & \vdots \\
&\overset{(3)}{\leq} \mathds{E}_{\mathcal{C}}  \sum_{\mathcal{S}_{d-1}(O)}(1+Q)^{\ell-|O|} \leq (1+Q)^{\ell},
\end{align*}
\noindent where in $(1)$ we have used that $(|s_1|+...+|s_d| \leq \ell)$, in $(2)$ we have dropped the restriction $|s_0|+...+|s_d| \leq \ell$ in the sum, and in $(3)$ we have used Eq. (\ref{eq:size_random}) repeatedly.

The enumeration algorithm is also the same as in Lemma \ref{lemma:counting_geometric}, taking on average $(1+Q)^\ell$ time.
\end{proof}

As a consequence of this, if the rate at which $T$ gates are introduced is sufficiently small, the classical algorithm succeeds with high probability:

\begin{mythm}{4.1}\label{theorem:random_T}

Consider a quantum circuit $\mathcal{C}$ drawn from a distribution $\mathcal{D}_Q$ constructed according to Definition \ref{random_model} with parameter $Q$, an initial product state $\rho_0$, and an observable $O=\sum_{k=1}^{g(n)} a_k O_k$ given by a linear combination of at most $g(n)=O(\mathrm{poly}(n))$ Pauli observables, where the $a_k$ and $O_k$ are known. Then, $\forall p>1-1/(1+Q)=\Theta(Q)$ as $Q \to 0$, there exists a classical algorithm that can estimate the noisy expectation value $\left<O\right>_{\mathrm{err}}=\mathrm{tr}\left(O\Phi_{\mathcal{C},p}(\rho_0)\right)$ up to precision $\epsilon \|O\|$ with a success probability of at least $1-\delta$ in time $T=O\left(\mathrm{poly}\left(n,1/\epsilon,1/\delta \right)\right)$. Moreover, for sufficiently small $p,Q$, it is
\[
T=g(n) \times O\left[ \left(\frac{ng(n)d}{\epsilon \delta}\right)^{\frac{Q}{p-Q}} \left(\frac{\log \frac{ng(n)d}{\delta \epsilon}}{p-Q}\right)\right].
\]

\end{mythm}

\begin{proof}
We can first assume that the observable $O$ is a single Pauli string, since the result for a linear combination of Pauli string follows can then be obtained by simply applying the algorithm to each of the $O_k$.
From the counting argument in Lemma \ref{lemma:counting_averaged}, the number of Pauli trajectories of a given weight $w$ can be upper bounded by $\mathds{E}_\mathcal{C} N_w \leq \left(1+Q\right)^w$ on average. The classical algorithm consists on computing the contribution of all the paths up to a weight $\ell=O(\log n)$, denoted by  $\left<O\right>_{\ell}$. One can write 
\[
\left| \left<O\right>_{\mathrm{err}}-\left<O\right>_{\ell}\right|=\sum_{w>\ell}F_w \left(1-p\right)^w.
\]

\noindent From unitarity, since $\|O\|=1$, it must be $|f(s)| \leq 1 \forall s$. One can then upper bound 
\[
\left|F_w\right| = \left| \sum_{|s|=w}f(s)\right| \leq N_w.
\]

\noindent And therefore, since $\mathds{E}_\mathcal{C} N_w\leq \left(1+Q\right)^w$, one can then bound

\[
\mathds{E}_\mathcal{C}\left| \sum_{w=\ell}^{nd} (1-p)^w F_w\right| \leq  \sum_{w=\ell}^{nd} (1-p)^w \mathds{E}_\mathcal{C} \left| F_w (\mathcal{C})\right| \leq nd (1-p)^\ell (1+Q)^\ell.
\]

\noindent Therefore, denoting $c=(1-p)(1+Q)$ the error for a given cut-off $\ell$ is upper bounded by $ndc^\ell$, which will be small when $c<1$.

Then, by Markov's inequality

\[
\mathrm{Pr}(\left|\left<O\right>_{\mathrm{err}}-\left<O\right>\right|_{\ell} > \epsilon) \leq \frac{ndc^\ell}{\epsilon}=\delta.
\]
To fulfill this condition it suffices to pick a cut-off

\[
\ell=\frac{\log (nd/\delta \epsilon)}{\log 1/c}.
\]

\noindent The runtime would be 

\[
T=O\left[\ell \times (1+Q)^\ell\right]=O\left(\mathrm{poly}\left(n,1/\epsilon,1/\delta \right)\right).
\]

\noindent Specifically, for $p,Q \ll 1$ it is

\[
T=O\left[ \left(\frac{nd}{\epsilon \delta}\right)^{\frac{Q}{p-Q}} \left(\frac{\log \frac{nd}{\delta \epsilon}}{p-Q}\right)\right].
\]

Finally, We now consider that $O$ is given by a linear combination of $g(n)=O(\mathrm{poly(n)})$ Pauli observables $O_k$, $O=\sum_{k=1}^{g(n)}O_k$. Using the method above, one can approximate each of them individually up to precision $\epsilon^{\prime}$. Using the same argument as in the proof of theorem \ref{theorem:T_gates}, and denoting the output of the classical algorithm by $\left<O\right>_{\ell}$, the error is upper bounded by
\[
\left| \left<O\right>_{\mathrm{err}}-\left<O\right>_{\ell}\right| \leq \sum_{k=1}^{g(n)}|a_k| \epsilon^{\prime} \leq \epsilon^{\prime} \sqrt{g(n)}\sqrt{\sum_k a_k^2} \leq \epsilon \sqrt{g(n)}\|O\|,
\]

\noindent One can then pick $\epsilon^{\prime}=\epsilon/\sqrt{g(n)}$, yielding a runtime:

\[
T=g(n) \times O\left[ \left(\frac{ng(n)d}{\epsilon \delta}\right)^{\frac{Q}{p-Q}} \left(\frac{\log \frac{ng(n)d}{\delta \epsilon}}{p-Q}\right)\right].
\]

\end{proof}

Note that, in the small $Q$ limit, the threshold condition reduces to $p>Q$, highlighting the fact that the classical simulation becomes easy when noise is introduced at a faster rate than $T$ gates.

\section{Variational circuits}\label{section:Ising_hamiltonian}

Here we apply the results of section \ref{section:T_gates} to variational circuits that attempt to find the ground energy of classical Ising Hamiltonians (for example, for solving combinatorial optimization problems). We show that there can be no asymptotic exponential quantum speedup in the system size with these circuits. In Lemma \ref{lemma:Ising_hamiltonian_hardness} it is shown that for the hard instances of the problem, there is a classical efficient algorithm that approximates the output energy of the quantum computation. On the other hand, Lemma \ref{lemma:easy_instances} shows that the ground energy of easy instances (i.e the ones with planar or close to planar graphs) can be efficiently approximated with a classical algorithm.
 
 We consider a classical Ising Hamiltonian on some graph $G=(V,E)$ with bounded degree $\Delta = \Theta(1)$.

 \begin{align}\label{eq:ising_hamiltonian}
 H=\sum_{(i,j) \in E}J_{ij}Z_i Z_j + \sum_i b_i Z_i,
 \end{align}

\noindent The only assumption that we make on the couplings is that they are non-zero $\Theta(1)$ constants i.e. $\max_{i,j}|J_{ij}|, \min_{i,j}|J_{ij}| = \Theta(1)$ and $J_{i, j}\neq 0 \ \forall \ (i, j) \in E$, and $\max_i |b_i|=\Theta(1)$.

We will now define the type of circuits that we will study. Variational circuits usually consist on the application of $m$ variational layers, $L_1,...,L_m$, where each layer is given by $L_j=e^{-i\gamma_j H}e^{-i\alpha_j H_M}$, where $H$ is the target Hamiltonian, $H_M$ is a mixer Hamiltonian, and $\gamma_{j},\alpha_{j}$ are optimization parameters. The mixer Hamiltonian is often constructed in QAOA as $H_M=\sum_i X_i$, but can also be more general. We refer to the extensive literature on variational algorithms for more details \cite{cerezo2021variational,QAOA_review_2024,farhi2014qaoa}. Our only assumption for the mixer Hamiltonian is that the term $e^{-i\alpha_j H_M}$ can be implemented by a constant depth circuit. In our construction, which is sketched in Fig. \ref{fig:variational_ansatz} we will assume that the cost Hamiltonian is implemented as follows:

\begin{align}\label{eq:variational_layer}
e^{-i\gamma_{j} H}=V_{j,1} S_1 V_{j,2} S_2 ... V_{j,M-1} S_{M-1} {V_{j,M}}.
\end{align}
Here, we denote by $V_{j,k}$ the computing unitaries, and by $S_{k}$ the permuting unitaries. With this notation, the $V_{j,k}$ are comprised of (possibly several layers of) 2-qubit local unitaries and implement the terms $e^{-i \gamma Z_{i_1} Z_{i_2}}$ and $e^{-i \gamma Z_{i}}$ of the Hamiltonian, while the $S_k$ are comprised of layers of SWAP gates. The $S_{k}$ are only necessary if the graph $G(n,E)$ does not match the geometry of the quantum computer. Since $G$ has bounded degree $\Delta=\Theta(1)$, a constant depth of computing unitaries is enough to implement the circuit for any bounded degree Ising Hamiltonian of the form of Eq. (\ref{eq:ising_hamiltonian}), $\sum_k \mathrm{depth}(V_{j,k})=O(1) \forall j$. Besides this, we do not place any restriction on how the circuit is constructed, or how many layers are used to construct the $V_{j,k}$, $S_{k}$, or $L_j$. However, in order to implement any Ising Hamiltonian the permuting unitaries $S_{k}$ might need a large depth in general. This motivates the definition of the embedding overhead factor $\lambda$, which will measure the depth overhead that is required to embed the graph onto a 2D architecture:

\begin{definition}\label{definition:variational_circuit}
Given a variational 2D quantum circuit constructed according to the prescription in Eq. (\ref{eq:variational_layer}), the embedding overhead factor of the circuit is defined as
\[
\lambda=\sum_k \mathrm{depth}(S_{k}).
\]
\end{definition}
\noindent We note that if the graph of the Hamiltonian matches the geometry of the quantum computer it is $\lambda=0$, since no embedding is required in that case. In the most general case it is $\lambda=O(n)$ \cite{harrigan2021QAOA_nonplanar}. Finally, we note that we consider a gate-set whose only non Clifford gates are $T$ gates, but the analysis can easily be extended to other gate-sets.

The problem can be formulated in terms of the embedding overhead factor $\lambda$. If the embedding overhead for a given circuit is $O(1)$, then the graph $G$ on which the Hamiltonian is defined is almost planar, and following the techniques in \cite{bansal2008classical_Ising} there exists a polynomial time classical algorithm that approximates the ground energy of the Hamiltonian, although its runtime is exponential with the precision. On the other hand, if $\lambda=\omega(1)$ for the quantum circuit, the results of section 3 imply that there is an efficient classical algorithm that can approximate the output of the quantum computer, since the number of Clifford gates will dominate the number of non-Clifford gates. This trade-off is formalized in the following Lemmas:

\begin{lemma}\label{lemma:easy_instances}
Consider an Ising Hamiltonian defined on a graph $G=(n,E)$ with bounded degree $\Delta = \Theta(1)$,

 \[
 H=\sum_{(i,j) \in E}J_{ij}Z_i Z_j + \sum_i b_iZ_i,
 \]
\noindent and a 2D quantum circuit $\mathcal{C}$ which attempts to find the ground energy of $H$, constructed according to Eq. (\ref{eq:variational_layer}), with an embedding overhead factor defined as in Definition \ref{definition:variational_circuit}. Then, if the overhead factor of the circuit is $\lambda=O(1)$ there is a classical algorithm that approximates the ground energy of $H$ up to absolute error $\epsilon n$ in time $T=O(n 2^{1/\epsilon^2})$.
\end{lemma}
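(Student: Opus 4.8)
The plan is to prove the Lemma in two moves: first, translate the hypothesis $\lambda = O(1)$ into a purely combinatorial statement about $G$—namely that $G$ is \emph{almost planar}, in the sense that it can be drawn in the plane with only $O(n)$ edge crossings—and second, invoke the planar Ising approximation scheme of \cite{bansal2008classical_Ising} on the resulting instance. The circuit $\mathcal{C}$ itself plays no role in the final algorithm; it enters only as a certificate that the interaction graph $G$ of $H$ is geometrically constrained.

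For the first move, I would exploit the geometric locality of the computing unitaries together with the bounded total SWAP depth. In the 2D architecture a single depth-one layer of SWAP gates is a matching of nearest-neighbour pairs, so it displaces the content of any physical site by Manhattan distance at most $1$; consequently the permuting stages $S_1,\dots,S_{M-1}$ of Eq. (\ref{eq:variational_layer}), whose total depth is $\sum_k \mathrm{depth}(S_k)=\lambda$ by Definition \ref{definition:variational_circuit}, move any logical qubit at most distance $\lambda$ from its initial site (the computing unitaries $V_{j,k}$ implement diagonal phase terms $e^{-i\gamma Z_{i_1}Z_{i_2}}$ and $e^{-i\gamma Z_i}$ and hence do not permute information). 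Now fix an edge $(i,j)\in E$: the factor $e^{-i\gamma_j J_{ij} Z_i Z_j}$ must be realised by some $V_{j,k}$ acting on two \emph{currently adjacent} sites holding qubits $i$ and $j$. At that moment each of $i,j$ is within distance $\lambda$ of its initial position, so by the triangle inequality the initial positions of $i$ and $j$ differ by at most $2\lambda+1=O(1)$. Placing each vertex of $G$ at the initial grid site of the corresponding qubit thus realises $G$ as a \emph{bounded-range} graph on the $\sqrt n\times\sqrt n$ lattice, every edge having length $O(1)$.

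Such a bounded-range lattice graph is almost planar: in any $O(1)\times O(1)$ window there are only $O(1)$ vertices and $O(1)$ incident edges, hence $O(1)$ crossings, so the total number of crossings in the grid drawing is $O(n)$. I would then planarize $H$ by replacing each of these $O(n)$ crossings with a standard constant-size Ising crossover gadget, obtaining a planar Ising Hamiltonian $H'$ on a graph $G'$ with $O(n)$ spins whose ground energy equals that of $H$ up to an explicitly computable constant shift, and whose total weight $W'=\sum_{(i,j)\in E'}|J'_{ij}|+\sum_i|b'_i|$ is still $O(n)$ (using $\Delta=\Theta(1)$ and $J,b=\Theta(1)$). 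Applying the scheme of \cite{bansal2008classical_Ising} to $H'$ with relative precision $\delta=\Theta(\epsilon)$ returns the ground energy to additive error $\delta W' = O(\epsilon n)$ in time $\mathrm{poly}(n)\,2^{O(1/\epsilon^2)}$, which is of the claimed form $O(n\,2^{1/\epsilon^2})$ after rescaling $\epsilon$.

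The main obstacle is the first move. The displacement bound must be argued carefully because SWAP and computing stages are interleaved across the whole decomposition (\ref{eq:variational_layer}) and repeated over the $m$ variational layers, so one has to confirm that the relevant notion of ``position'' and the accumulated displacement are bounded by $\lambda$ alone, independent of $m$ and of which computing stage realises a given edge. A secondary technical point is verifying that the crossover gadgets preserve the ground energy exactly (up to a known shift) while keeping both the number of spins and the total weight linear in $n$, so that the additive error of \cite{bansal2008classical_Ising}, which scales with $W'$, indeed collapses to $O(\epsilon n)$.
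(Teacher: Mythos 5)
Your first move is sound and in fact sharper than what the paper itself writes: the paper merely \emph{asserts} that $\lambda=O(1)$ forces every edge of $G$ to connect sites at distance $O(\lambda)$ on the grid (``otherwise it would be impossible to embed the graph with only a constant number of SWAP layers''), whereas your displacement argument --- each SWAP layer moves a logical qubit by Manhattan distance at most $1$, and each factor $e^{-i\gamma J_{ij}Z_iZ_j}$ must act on currently adjacent sites --- is the right way to make that rigorous. The worry you raise about the $m$ variational layers also dissolves: every edge of $E$ must already be realised within the \emph{first} cost unitary $e^{-i\gamma_1 H}$, whose permuting stages have total depth $\lambda$ by Definition \ref{definition:variational_circuit}, so the qubit positions at the start of that unitary already give the required bounded-range embedding, independently of $m$.

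The genuine gap is your second move. The paper never planarizes: it adapts the blocking idea of \cite{bansal2008classical_Ising} directly to the bounded-range grid graph --- partition the lattice into $L\times L$ blocks with $L=\Theta(\lambda/\epsilon)$, delete the $O(n/L)$ edges (each of weight $O(1)$, by bounded degree and bounded range) that cross block boundaries, and solve each block exactly by brute force in time $2^{O(L^2)}$; this immediately gives error $O(\epsilon n)$ and runtime $O\left(n\,2^{O(1/\epsilon^2)}\right)$, with no planarity required. Your route instead hinges on ``a standard constant-size Ising crossover gadget'' that preserves the ground energy exactly, up to a known shift, with $O(1)$ weights. No such gadget is standard, and the claim is delicate: for Ising \emph{without} fields such a gadget provably cannot exist unless $\mathrm{P}=\mathrm{NP}$, since planar MAX-CUT is polynomial-time solvable while general MAX-CUT is NP-hard. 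Any valid gadget must therefore lean essentially on local fields --- e.g.\ exact parity penalties of the form $(x+y+z-2w)^2$ obtained by quadratization, arranged as three XOR constraints in a planar crossover, with penalty weights chosen to dominate the $O(1)$ couplings they protect, plus a treatment of edges that cross several others --- and none of this is constructed or cited in your proposal, even though it is the load-bearing step. As written the argument is incomplete; the repair is either to build such a gadget in detail, or, far more simply, to drop the planarization and run the blocking argument directly on the bounded-range graph, which is exactly what the paper does.
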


\begin{proof} This can be shown by adapting the algorithm in \cite{bansal2008classical_Ising}. We will first show the result in the easiest case with $\lambda=0$, which was already obtained in \cite{bansal2008classical_Ising}, and then generalize it to $\lambda=O(1)$.

We assume that $\lambda=0$. In this case $H$ is already defined on a 2D lattice and no embedding is necessary. One can then define the Hamiltonian

\[
H^{\prime}=\sum_{(i,j) \in E^{\prime}}J_{ij}Z_i Z_j + \sum_i b_iZ_i,
\]

\noindent where the set of edges $E^{\prime}$ is obtained by subdividing the 2D lattice into $n/L^2$ blocks of size $L \times L$, and omitting the edges of $E$ that connect the sub-blocks. The ground state of $H^{\prime}$ is denoted by 
$\ket{\psi^{\prime}}$, and its ground energy by $E_0^{\prime}=\bra{\psi^{\prime}}H^{\prime}\ket{\psi^{\prime}}$, while the ground state of $H$ is denoted by 
$\ket{\psi}$, and its ground energy by $E_0=\bra{\psi}H\ket{\psi}$.

Then, since the Hamiltonian $H^{\prime}$ is comprised of disjoint sub-blocks of size $L \times L$, it follows that $E_0^{\prime}$ can be computed in time $T=O(2^{L^2} \times n / L^2)$ by solving each sub-block independently. Furthermore, since $H^{\prime}$ is obtained by omitting at most $4L$ edges of $H$ in every sub-block, and there are $n/L^2$ sub-blocks, one obtains that for every state $\ket{\phi}$:

\[
\left|\bra{\phi}H\ket{\phi}-\bra{\phi}H^{\prime}\ket{\phi}\right| \leq 4L J_\text{max} \frac{n}{L^2}.
\]

\noindent Then, if $E_0^{\prime}>E_0$ one finds

\[
E_0^{\prime} - E_0 \leq \bra{\psi}H^{\prime}\ket{\psi}-E_0 = \bra{\psi}H^{\prime}\ket{\psi}-\bra{\psi}H\ket{\psi} \leq 4J_{\max}\frac{n}{L},
\]
\noindent while, if if $E_0^{\prime}<E_0$
\[
E_0 - E_0^{\prime} \leq \bra{\psi^{\prime}}H\ket{\psi^{\prime}} -E_0^{\prime} = \bra{\psi^{\prime}}H\ket{\psi^{\prime}}-\bra{\psi^{\prime}}H^{\prime}\ket{\psi^{\prime}}\leq 4J_{\max}\frac{n}{L}.
\]

\noindent Hence, it is enough to pick

\[
L=\frac{4J_{\mathrm{max}}}{\epsilon}
\]
\noindent to approximate the ground energy $E_0$ up to an absolute error $\epsilon W$, with runtime $T=O(n 2^{1/\epsilon^2})$.

We now consider the case where $\lambda=O(1)$. Again, we subdivide the 2D lattice in sub-blocks of size $L \times L$, and the Hamiltonian $H^{\prime}$ is defined in the same way. In this case, the graph $G$ can contain edges that are not nearest neighbor in this 2D geometry. However, the condition $\lambda=O(1)$ places a restriction on how long the edges can be. Specifically, there exists a constant $c=O(\lambda)=O(1)$ such that no edge can connect points that are further than $c$ in the lattice. That is, points that are separated by a distance $\omega(1)$ cannot be connected by an edge. That is because, if they were, it would be impossible to embed the graph with only a constant number of SWAP layers. Therefore, for every sub-block of size $L \times L$ we are truncating edges of at most size $c$. This implies that we are truncating edges that connect points that are at most at a distance $c$ from the boundary of the sub-block. Since there are $8Lc$ such points, this means we are truncating at most $8Lc \Delta$ edges. This implies that, for every state $\ket{\phi}$:

\[
\left|\bra{\phi}H\ket{\phi}-\bra{\phi}H^{\prime}\ket{\phi}\right| \leq8Lc \Delta J_\text{max} \frac{n}{L^2}.
\]

\noindent Following the same argument as above yields

\[
\left|E_0^{\prime} - E_0\right| \leq 8cJ_{\max}\frac{n}{L}.
\]
It is then enough to pick $L=8cJ_{\max}/\epsilon=O(\lambda/\epsilon)$ to ensure that the relative error is smaller than $\epsilon$. Finally, since a sub-block contains $L^2$ points, its ground energy can be found in time $O(2^{L^2})=O(2^{1/\epsilon^2})$. The total runtime is then $T=O(n 2^{1/\epsilon^2})$.

\end{proof}

On the other hand, if a variational circuit is such that its embedding overhead is large, the Clifford gates will dominate:

\begin{lemma}\label{lemma:Ising_hamiltonian_hardness}
Consider an Ising Hamiltonian defined on a graph $G=(n,E)$ with bounded degree $\Delta = \Theta(1)$,

 \[
 H=\sum_{(i,j) \in E}J_{ij}Z_i Z_j + \sum_i b_iZ_i,
 \]
\noindent and a 2D quantum circuit $\mathcal{C}$ which attempts to find the ground energy of $H$, constructed according to Eq. (\ref{eq:variational_layer}), with an embedding overhead factor defined as in Definition \ref{definition:variational_circuit}. Then, if the overhead factor of the circuit is $\lambda=\omega(1)$, there exists a classical algorithm that can estimate the expectation value of the energy in the noisy quantum computation $\left<H\right>_{\mathrm{err}}=\mathrm{tr}(H\Phi_{\mathcal{C},p}^{\mathrm{noisy}}(\rho_0))$ up to absolute error $\epsilon n$ in time $T=O(\mathrm{poly}(n,1/\epsilon))$, $\forall p>p^*$, with $p^*=1-1/2^{O(1/\lambda)}=o(1)$. For a constant error rate $p=\Theta(1)$ the runtime is

\[
  T= (|E|+n) \log \left(\frac{nd}{\epsilon}\right)\left( \frac{nd}{\epsilon}\right)^{O(\frac{1}{\lambda})}.
\]
 
\end{lemma}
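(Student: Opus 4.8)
The plan is to reduce the statement to Theorem~\ref{theorem:T_gates} by showing that a large embedding overhead forces the circuit to be $(Q,a\log n)$-sparse in the $T$ gates with $Q=O(1/\lambda)$, and then to treat the energy as a linear combination of $O(n)$ Pauli observables. Concretely, $H$ in (\ref{eq:ising_hamiltonian}) is a sum of $g(n)=|E|+n=O(n)$ Pauli strings (bounded degree gives $|E|=O(n)$) with $\Theta(1)$ coefficients, so $\|H\|=O(n)$. Theorem~\ref{theorem:T_gates} then computes $\langle H\rangle_{\mathrm{err}}$ to precision $\epsilon'\|H\|$; choosing $\epsilon'=\Theta(\epsilon)$ yields absolute error $\Theta(\epsilon n)$, which is the target precision.

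The crux is the sparseness estimate. By construction (\ref{eq:variational_layer}), each variational layer alternates computing unitaries $V_{j,k}$ (Clifford $+$ $T$, with total depth $\sum_k \mathrm{depth}(V_{j,k})=O(1)$) with permuting unitaries $S_k$ (SWAP gates only, total depth $\lambda$), and the mixer $e^{-i\alpha_j H_M}$ is also constant depth. Hence $T$ gates live only in the thin computing slabs, while the thick SWAP slabs of depth $\lambda=\omega(1)$ are purely Clifford. I would establish the analogue of (\ref{eq:T_gates_Q}) directly for nonzero Pauli paths: the number of $T$ gates encountered is $\sum_{t\in\mathrm{comp}}T_t(s_t)\le \sum_{t\in\mathrm{comp}}|s_t|$. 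Since SWAP gates preserve Pauli weight exactly, within each layer the weight is constant across every SWAP slab and changes by at most a constant factor across the $O(1)$-depth computing region. Thus a SWAP slab of depth $\lambda$ contributes $\Omega(\lambda W_j)$ to the path weight while the computing region contributes at most $O(W_j)$ encountered $T$ gates, where $W_j$ is the representative weight in layer $j$. Summing the per-layer ratio $w_j^{\mathrm{comp}}\le O(1/\lambda)\, w_j$ over all layers gives $\sum_t T_t(s_t)\le O(1/\lambda)\, w$, which is exactly the ingredient Lemma~\ref{lemma:counting_geometric} needs to conclude $N_w\le 2^{Qw}$ with $Q=O(1/\lambda)$.

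With this in hand I would invoke Theorem~\ref{theorem:T_gates} with $Q=O(1/\lambda)$. Its threshold $p>1-1/2^{Q}$ becomes $p^{*}=1-1/2^{O(1/\lambda)}$, which is $o(1)$ precisely because $\lambda=\omega(1)$. For constant $p$ one has $p-Q\log 2=\Theta(1)$ and $Q/(p-Q\log 2)=O(1/\lambda)$, so the dominant factor in the runtime of Theorem~\ref{theorem:T_gates} is $(ng(n)d/\epsilon)^{O(1/\lambda)}$; substituting $g(n)=|E|+n$ reproduces $T=(|E|+n)\log(nd/\epsilon)(nd/\epsilon)^{O(1/\lambda)}$, with the $n^{a\log 2}=\mathrm{poly}(n)$ contribution absorbed.

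The main obstacle is making the sparseness estimate fully rigorous at the level of Definition~\ref{definition:sparseness}, which quantifies over \emph{all} connected subsets $\mathcal{A}$, not only Pauli-path supports. A connected set confined to a computing region (spatially extended yet temporally thin) need not traverse a SWAP slab, so the clean $O(1/\lambda)$ bound must be argued either by restricting, as in Lemma~\ref{lemma:counting_geometric}, to supports $\mathcal{A}(s)$ of nonzero paths, which are time-contiguous and therefore pay for crossing the intervening SWAP slabs, or by fixing $a$ and the decomposition of $e^{-i\gamma_j H}$ so that every sufficiently large contiguous set overlaps a proportional amount of SWAP depth. Controlling the boundary layers where a path's temporal support only partially covers a variational layer, and the constant-factor weight change across the computing slabs, are the delicate points; once the per-layer ratio is secured everything else follows mechanically from Theorem~\ref{theorem:T_gates}.
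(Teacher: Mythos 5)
Your proposal is correct and takes essentially the same approach as the paper's proof: the paper likewise never verifies sparseness in the sense of Definition \ref{definition:sparseness}, but proves the bound $\sum_k |T_k(s_k)| \le O(1/\lambda)\, w$ directly for nonzero Pauli paths — using precisely your three ingredients (SWAP slabs preserve Pauli weight, the depth-$O(1)$ computing slabs change it by at most a constant factor and contribute at most $O(\max_k |s_k|)$ encountered $T$ gates per variational layer) — and then reruns the counting and truncation analysis of Lemma \ref{lemma:counting_geometric} rather than invoking Theorem \ref{theorem:T_gates} as a black box. The only discrepancy is bookkeeping: a verbatim invocation of Theorem \ref{theorem:T_gates} would carry the additive $n^{a\log 2}$ term, which is not actually ``absorbed'' by the claimed runtime, but the path-level route you yourself identify as the fix needs no $a\log n$ floor on the cutoff, so that term never appears — exactly as in the paper.
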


\begin{proof}

We write the Hamiltonian as $H=\sum_{k=1}^{g(n)} a_k O_k$, where the $O_k$ are Pauli operators with at most $2-$local terms. Here, $g(n)= n+|E|$, since the graph contains $n$ vertices and $|E|$ edges. We will first show that each $O_k$ can be approximated efficiently using the Pauli path method.

We denote the $m$ variational layers of the circuit by $L_1$,...,$L_m$, where each $L_j$ is built according to the prescription in Eq. (\ref{eq:variational_layer}), and contains layers of computing and permuting unitaries. That is, each $L_j$ can be written as $L_j=V_{j,1} S_{1} ... S_{M-1}V_{j,M} e^{-i \alpha_j H_M}$, where $M=O(1)$. We denote $c_j=\sum_{k=1}^M \mathrm{depth}(V_{j,k})+\mathrm{depth}\left(e^{-i \alpha_j H_M}\right)=O(1)$. That is, for a given variational layer $L_j$, $c_j$ denotes the total depth of the computing unitaries, while $\lambda$ denotes the total depth of the permuting unitaries. We denote by $t_j$ the time step when the first layer of gates of $L_j$ is applied, and by $t_j^{\prime}=t_{j}+\mathrm{depth}(L_j)$ the last one. 

A Pauli path is given by $d+1$ Pauli strings, $s=(s_0,s_1,...,s_d)$, with $s_k \in \{I,X,Y,Z\}^{\otimes n}$ representing the Pauli string at time $k$. We will follow the notation in the proof of Lemma \ref{lemma:counting_geometric}, and denote by $T_k(s_k)$ the number of $T$ gates in the layer $k$ of the quantum circuit that are applied on a position $(x,y)$ such that $s_k(x,y) \neq I$. That is $T_k(s_k)$ counts the number of $T$ gates that the Pauli path $s$ 'encounters' at time $k$, where we define encountering as applying the $T$ gate to a Pauli operator, but not to the identity.

We now bound:

\begin{align}\label{eq:bound_tk}
\sum_{k=t_j}^{t^{\prime}_j} |T_k(s_k)| \leq c_j \max_{t_i \leq k \leq t^{\prime}_i} \left|s_k\right|.
\end{align}

\noindent Here we used that for the layers that contain only SWAP gates $T_k(s_k)=0$, and therefore only the $V_{j,k}$ layers contribute to the sum. Furthermore, we can bound

\begin{align}\label{eq:bound_sum_sk}
\sum_{k=t_j}^{t^{\prime}_j} |s_k|\geq \lambda \min_{t_j \leq k \leq t^{\prime}_j} \left|s_k\right|,
\end{align}
\noindent since we know there are $\lambda$ layers that contain only SWAP gates. Finally, we note that

\begin{align}\label{eq:bound_min_sk}
\min_{t_j \leq k \leq t^{\prime}_j} \left|s_k\right| \geq \frac{1}{2^{c_j}}  \max_{t_j \leq k \leq t^{\prime}_j},
\end{align}
since the permuting layers  $S_{k}$ cannot decrease the weight of a Pauli string, while the computing layers $V_{j,k}$ can decrease it by at most a factor of $2$. 

Applying the bounds in Eqs.(\ref{eq:bound_tk}), (\ref{eq:bound_sum_sk}) and (\ref{eq:bound_min_sk}) leads to:

\[
\sum_{k=t_j}^{t_j^{\prime}} \left|T_k(s_k)\right| \leq \frac{c_j2^{c_j}}{\lambda} \sum_{k=t_j}^{t_j^{\prime}} \left|s_k\right|,
\]
\noindent with $c_j=O(1)$. Since this is expression true for all $L_j$, one finds that for all nonzero Pauli paths of weight $w$:

\[
\sum_{k=1}^d \left|T_k(s_k)\right| \leq \frac{\max_j\left(c_j2^{c_j}\right)}{\lambda}w =  Q w,
\]

\noindent where $Q=\max_j\left(c_j2^{c_j}\right)/\lambda=O(1/\lambda)$. Then, applying the technique in the proof of Lemma \ref{lemma:counting_geometric}, one can bound the number of nonzero Pauli paths of weight $w$, $N_w \leq 2^{Qw}$. The approximation error of the classical simulation is then (see Eq.(\ref{eq:F_w}) for the definition of $F_w$):

\begin{align}\label{eq:error_variational}
\left| \sum_{w>\ell} F_w (1-p)^w \right| \leq  \left| \sum_{w=\ell}^{nd} N_w (1-p)^w \right| \leq nd\left[2^Q(1-p)\right]^\ell \leq \epsilon.
\end{align}

\noindent Therefore, the error will decrease exponentially with $\ell$ if $2^Q(1-p)<1$. This condition will be fulfilled for $\forall p>p^*$, with

\[
p^*=1-1/2^{Q}=1-1/2^{O(1/\lambda)}=o(1).
\]
\noindent As a consequence, the approximation error of the algorithm will be small for every constant error rate $p$. 

To fulfill Eq.(\ref{eq:error_variational}) it suffices to pick a cut-off

\begin{align}\label{eq:exp:cutoff}
\ell=\frac{\log (nd/\epsilon)}{\log \left(1/(2^{Q}(1-p))\right)}.
\end{align}

\noindent The runtime is therefore $T=O(\ell \times 2^{Q \ell})=O(\mathrm{poly}(n,1/\epsilon))$, when $p>p^*$.

Furthermore, if one picks a constant error rate $p=\Theta(1)$, using Eq. (\ref{eq:exp:cutoff}), the cut-off scales as $\ell=O(\log (nd/\epsilon))$, since $Q=O(1/\lambda)=o(1)$.
 Therefore, the total runtime is 
 \[
 T=\ell \times 2^{Q\ell}= \log\left(\frac{nd}{\epsilon}\right) \left(\frac{nd}{\epsilon}\right)^{O\left(\frac{1}{\lambda}\right)}, \forall p=\Theta(1).
 \]
 \noindent Note that the runtine is subpolynomial.

Now, we bound the total error for the energy. $H$ is given by a linear combination of $g(n)=n+|E|$ Pauli observables $O_k$, $H=\sum_{k=1}^{g(n)}a_k O_k$. Using the Pauli path method above, we have shown that one can approximate each of the $O_k$ individually up to precision $\epsilon^{\prime}$. That is, the classical algorithm can output $\left<O_k\right>_{\ell}$ such that $\left|\left<O_k\right>_{\ell}-\left<O_k\right>_{\mathrm{err}}\right| \leq \epsilon^{\prime}$. Then, the output energy provided by the Pauli paths method is $\left<H\right>_{\ell}=\sum_k a_k \left<O_k\right>_{\ell}$. 

%The operator norm $\|H\|$ can be lower bounded using

%\[
%\|H\| \geq \|H\|_F/\sqrt{2^n} =\sqrt{\sum_k a_k^2}=\sqrt{\sum_{i,j}J_{ij}^2+\sum_i b_i^2}.
%\]

\noindent And the total error can be upper bounded by
\[
\left| \left<H\right>_{\mathrm{err}}-\left<H\right>_{\ell}\right| \leq \sum_{k=1}^{g(n)}|a_k| \epsilon^{\prime} \leq \epsilon^{\prime} (J_{\max} \Delta \frac{n}{2} + b_{\max}n),
\]

\noindent One can then pick $\epsilon^{\prime}=\epsilon/(J_{\max}\Delta/2+b_{\max})=\Theta(\epsilon)$, yielding a runtime:

\[
  T= (|E|+n) \log \left(\frac{nd}{\epsilon}\right)\left( \frac{nd}{\epsilon}\right)^{O(\frac{1}{\lambda})}.
\]

\end{proof}
\noindent Putting together Lemmas \ref{lemma:easy_instances} and \ref{lemma:Ising_hamiltonian_hardness} provides our result:

\begin{theorem}[Variational circuits]\label{theorem:variational_circuits}
Consider a target Ising Hamiltonian $H$ of the form of Eq. \ref{eq:ising_hamiltonian}, defined on a graph $G=(n,E)$ of bounded degree $\Delta=\Theta(1)$, and with couplings of similar strength, $\max_{i,j} |J_{ij}|/\min_{i,j} |J_{ij}| =O(1)$. Consider also a variational circuit that attempts to approximate the ground energy of $H$, constructed according to Eq. \ref{eq:variational_layer} and with an embedding overhead factor defined as in Definition \ref{definition:variational_circuit}. Then, there exists either:
\begin{itemize}
\item A classical polynomial time approximation algorithm that can approximate $E_0$, the ground energy of $H$, up to absolute error $\epsilon n $ in time $T=O\left(n2^{1/\epsilon^2}\right)$.

\item A classical algorithm that can approximate the expectation value of the energy of the noisy quantum circuit $\left<H\right>=\mathrm{tr}(H\Phi_{\mathcal{C},p}^{\mathrm{noisy}}(\rho_0))$ up to absolute error $\epsilon n$ in time $T=O(\mathrm{poly}(n,1/\epsilon))$, for any constant noise rate $p$.
\end{itemize}
\end{theorem}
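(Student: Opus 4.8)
The plan is to prove the statement as a dichotomy governed by the embedding overhead factor $\lambda$ of Definition \ref{definition:variational_circuit}, dispatching the two regimes to the two lemmas already established. When the embedding is near-local ($\lambda$ small) the graph $G$ is almost planar and I approximate the ground energy directly; when the embedding is highly non-local ($\lambda$ large) the circuit is Clifford-dominated and the Pauli-path method approximates the noisy output. The only real work is to fix a threshold on $\lambda$ that makes the two cases jointly exhaustive for a given constant noise rate $p$.

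First I would fix the target constant noise rate $p$. Recall from the proof of Lemma \ref{lemma:Ising_hamiltonian_hardness} that the relevant sparseness parameter is $Q = \max_j(c_j 2^{c_j})/\lambda = O(1/\lambda)$, with $c_j = O(1)$ the depth of the computing unitaries in layer $L_j$, and that the truncation converges exactly when $2^{Q}(1-p) < 1$, i.e.~$Q < \log_2\!\big(1/(1-p)\big)$. Solving for $\lambda$ produces an explicit constant threshold $\lambda_0 = \max_j(c_j 2^{c_j})/\log_2\!\big(1/(1-p)\big) = \Theta(1)$: whenever $\lambda > \lambda_0$ the second bullet follows from Lemma \ref{lemma:Ising_hamiltonian_hardness}, approximating the noisy energy $\langle H\rangle_{\mathrm{err}}$ to error $\epsilon n$ in time $\mathrm{poly}(n,1/\epsilon)$.

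Next I would treat the complementary regime $\lambda \le \lambda_0$. Since $\lambda_0$ is a constant independent of $n$, this is precisely the regime $\lambda = O(1)$, so Lemma \ref{lemma:easy_instances} applies and delivers the first bullet: a classical algorithm approximating the true ground energy $E_0$ to absolute error $\epsilon n$ in time $O(n 2^{1/\epsilon^2})$, where the bounded-degree and $\Theta(1)$-coupling assumptions are exactly what that lemma needs to control the edges truncated at the sub-block boundaries. Because every circuit satisfies either $\lambda > \lambda_0$ or $\lambda \le \lambda_0$, the two cases are exhaustive and the disjunction asserted by the theorem holds.

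The main obstacle, and essentially the only nontrivial point, is ensuring the case split is genuinely exhaustive rather than a loose appeal to ``$\lambda = O(1)$ or $\lambda = \omega(1)$'': the subtlety is that $\lambda(n)$ need not be cleanly bounded or divergent, as it could oscillate with $n$. Tying the threshold $\lambda_0$ to the fixed constant $p$ as above resolves this, since $\{\lambda \le \lambda_0\}$ and $\{\lambda > \lambda_0\}$ partition all possibilities for any single value of $p$. I would also verify that the error budgets are stated in consistent units — both lemmas report absolute error $\epsilon n$, matching the additive $\Theta(n)$ scale of $H$ on a bounded-degree graph — so no rescaling of $\epsilon$ is needed when stitching the two regimes into the final statement.
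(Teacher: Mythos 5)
Your proposal is correct and follows essentially the same route as the paper: the paper's proof of Theorem \ref{theorem:variational_circuits} is exactly the dichotomy on the embedding overhead, dispatching $\lambda=O(1)$ to Lemma \ref{lemma:easy_instances} and $\lambda=\omega(1)$ to Lemma \ref{lemma:Ising_hamiltonian_hardness}, and both lemmas already report error in the same units $\epsilon n$, so the stitching is immediate.

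Where you go beyond the paper is the explicit threshold $\lambda_0(p)=\max_j\left(c_j 2^{c_j}\right)/\log_2\left(1/(1-p)\right)$. This is a genuine tightening: the paper's split ``$\lambda=O(1)$ or $\lambda=\omega(1)$'' is indeed not exhaustive over all sequences $\lambda(n)$, and your version correctly uses the \emph{proof} of Lemma \ref{lemma:Ising_hamiltonian_hardness} (the condition $2^{Q}(1-p)<1$ with $Q=O(1/\lambda)$) rather than its statement, which is stated only for $\lambda=\omega(1)$; your implicit quantifier handling is also sound, since bullet one is $p$-independent, so ``for every $p$: bullet one or bullet two at $p$'' implies the theorem's ``bullet one, or bullet two for every $p$.'' One caveat: your fix does not fully dispose of the pathology you raise. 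The sets $\{\lambda\le\lambda_0\}$ and $\{\lambda>\lambda_0\}$ partition the possible \emph{values} of $\lambda$, but a family $\lambda(n)$ can still cross the threshold infinitely often as $n$ grows, in which case neither disjunct of the theorem holds uniformly in $n$ (a per-$n$ hybrid algorithm alternates between approximating $E_0$ and approximating $\langle H\rangle_{\mathrm{err}}$, which is not what either bullet promises). This residual gap afflicts the paper's proof equally, so it is not a defect of your argument relative to the paper's, but the claim that tying $\lambda_0$ to $p$ ``resolves'' exhaustiveness should be weakened to: it resolves it for any family whose overhead stays on one side of $\lambda_0(p)$, and in particular for any fixed instance.
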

\begin{proof}
The theorem follows directly from Lemmas \ref{lemma:easy_instances} and \ref{lemma:Ising_hamiltonian_hardness}. If $\lambda=O(1)$, Lemma \ref{lemma:easy_instances} provides a classical approximation algorithm that approximates the ground energy $E_0$ up to absolute error $\epsilon n$ in time $T=O\left(n2^{1/\epsilon^2}\right)$. On the other hand, if $\lambda=\omega(1)$, Lemma \ref{lemma:Ising_hamiltonian_hardness} shows that there is a classical efficient algorithm to approximate the output energy of the noisy quantum circuit up to absolute error $\epsilon n$ in time $T=O(\mathrm{poly}(n,1/\epsilon))$.
\end{proof}

We remark that these results have been obtained for circuits that are comprised of Clifford + $T$ gates because adding $T$ gates to the Clifford gate set is a natural way to obtain a universal gate set. However, these results can be trivially generalized to gate-sets that contain other non-Clifford gates, with the only price of obtaining worse constant factors in the expressions.

\section{Worst-case performance}\label{Section: counterexample}

In this section we show that the Pauli paths method fails to approximate noisy expectation values when naively applied. To do this, we we consider the expectation value of an observable $O$, which is constructed as a convex combination of $O(\mathrm{poly}(n))$ Pauli observables. This allows us to construct a circuit in which the error grows exponentially with the cut-off $\ell$. The construction of such a circuit $\mathcal{C}$ is remarkably simple, consisting only on a single layer of 3-qubit gates. Therefore, we remark that, even though the naive application of the Pauli path based method fails, the circuit is trivially easy to simulate classically.
In what follows, we will show how to construct the circuit $\mathcal{C}$. We assume, without loss of generality, that the system size $n$ is a multiple of $3$. Then, $\mathcal{C}$ can be constructed by simply applying a layer of $n/3$ $3$-qubit gates. The description of the $3$-qubit gate, which we denote by $V$, and $\mathcal{C}$, is sketched in Fig. \ref{fig:counterexample_circuit}, and detailed in the following lemma:

\begin{figure*}
    \centering
    \includegraphics[scale=0.8]{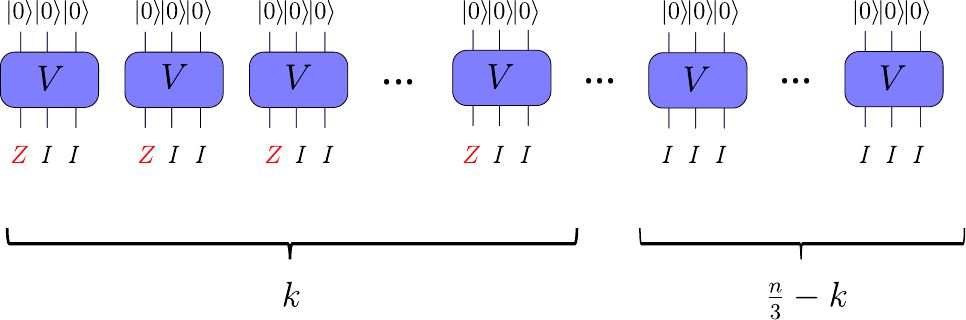}
    \caption{Sketch of the circuit $\mathcal{C}$ that we study. Without loss of generality, we assume that the system size $n$ is a multiple of $3$. The circuits of applying a layer of $3$-qubit unitaries to all the qubits, with initial state $\left(\ket{0}\bra{0}\right)^{\otimes n}$. We have also represented the observable $O_k=\prod_{i=1}^kZ_{3i-2}$, since the observable that we will measure is a convex combination of $O_k$ for different values of $k$.}
\label{fig:counterexample_circuit}
\end{figure*}

\begin{lemma} [$V$ gate]\label{lemma:V_gate}
Consider the inital state $\rho_0 = \left(\ket{0}\bra{0}\right)^{\otimes n}$ on $n$ qubits and the observable $O_k=\prod_{i=1}^k{Z_{3i-2}}$, with $k < n/3$. Then, one can construct a circuit $\mathcal{C}$ made of a single layer of 3-qubit unitaries such that $F_{2k}=(3/2)^k$.
\end{lemma}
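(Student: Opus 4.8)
The plan is to specialize the Pauli-path formula \eqref{eq:expression_Pauli_path} to the single-layer circuit ($d=1$), where a path is simply $s=(s_0,s_1)$, and to exploit the fact that the layer, the observable, and the initial state all factor over the $n/3$ disjoint three-qubit blocks $\{3i-2,3i-1,3i\}$. First I would read off the two constraints that make $f(s)$ nonzero: the factor $\mathrm{tr}(O_k s_1)$ forces $s_1=O_k$, and the factor $\mathrm{tr}(s_0\rho_0)=\bra{0^n} s_0 \ket{0^n}$ forces $s_0\in\{I,Z\}^{\otimes n}$. With $s_1$ pinned to $O_k$, which has weight $k$, a contribution of total weight $2k$ requires $|s_0|=k$.

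Next I would factorize the transition amplitude as $\mathrm{tr}(s_1 U_1 s_0 U_1^{\dagger})=\prod_i \mathrm{tr}\!\left(s_1^{(i)} V s_0^{(i)} V^{\dagger}\right)$ over blocks. On the $n/3-k$ trailing blocks $s_1^{(i)}=III$, which forces $s_0^{(i)}=III$ (otherwise the trace of a nontrivial Pauli vanishes), so these blocks carry no weight. On each of the first $k$ blocks $s_1^{(i)}=Z_1$, and by cyclicity $\mathrm{tr}\!\left(Z_1 V s_0^{(i)} V^{\dagger}\right)=\mathrm{tr}\!\left(P\, s_0^{(i)}\right)=2^3 c_{s_0^{(i)}}$, where $P:=V^{\dagger} Z_1 V=\sum_\sigma c_\sigma \sigma$ is the Heisenberg-evolved single-qubit $Z$. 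Tracking the normalization $1/2^{2n}$ against the factor $2^n$ from $\mathrm{tr}(O_k s_1)$ and the $2^3$ per block, everything collapses to the clean identity $f(s)=\prod_{i=1}^k c_{s_0^{(i)}}$.

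At this point I would repackage the weight-$2k$ sum as a generating function. Defining the per-block polynomial $g(x)=\sum_{\sigma\in\{I,Z\}^{\otimes 3}} c_\sigma x^{|\sigma|}$, the constraint $\sum_{i\le k}|s_0^{(i)}|=k$ gives $F_{2k}=[x^k]\,g(x)^k$. The crucial simplification is that $P$ is traceless, since conjugation preserves the trace of the traceless $Z_1$; hence $c_{III}=0$, $g(x)$ has no constant term, and writing $g(x)=x\,h(x)$ yields $F_{2k}=[x^0]h(x)^k=a_1^k$, where $a_1=c_{ZII}+c_{IZI}+c_{IIZ}$ is the sum of the three weight-one coefficients of $P$. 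The entire statement therefore reduces to building a three-qubit gate $V$ with $a_1=3/2$.

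This last construction is the only real content, and I expect it to be the main (though modest) obstacle. Rather than searching among unitaries directly, I would prescribe the target involution $P=\tfrac12\left(Z_1+Z_2+Z_3-Z_1Z_2Z_3\right)$: a short commuting-Pauli expansion shows $P^{\dagger}=P$, $P^2=I$ and $\mathrm{tr}(P)=0$, so $P$ shares the $\pm1$ spectrum of $Z_1$ (each eigenvalue with multiplicity four) and is thus equal to $V^{\dagger} Z_1 V$ for some unitary $V$, whose mere existence is all the lemma requires. By inspection the three weight-one coefficients are each $1/2$, giving $a_1=3/2$ and hence $F_{2k}=(3/2)^k$. As a consistency check, Cauchy--Schwarz against $\sum_\sigma c_\sigma^2=1$ caps $a_1$ at $\sqrt{3}$, so the value $3/2$ is safely attainable.
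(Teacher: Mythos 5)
Your proof is correct, and it lands on the same essential objects as the paper --- the operator $P=\tfrac12\left(Z_1+Z_2+Z_3-Z_1Z_2Z_3\right)$ and the tensor-product structure over three-qubit blocks --- but it organizes the argument in the reverse direction. The paper goes forward: it writes down an explicit majority-voting permutation $V$ (defined by its action on computational basis states), verifies $V^{\dagger}Z_1V=\tfrac12(Z_1+Z_2+Z_3)-\tfrac12 Z_1Z_2Z_3$, enumerates the four nonzero single-block paths ($s_1=Z_1$, $s_0\in\{Z_1,Z_2,Z_3,Z_1Z_2Z_3\}$) to get $F_2=3/2$, $F_4=-1/2$, $F_1=F_3=0$, and multiplies over blocks. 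You instead reduce the lemma to producing \emph{any} $V$ for which the weight-one $\{I,Z\}$ coefficients of $V^{\dagger}Z_1V$ sum to $a_1=3/2$, and then obtain such a $V$ non-constructively, from the spectral fact that a traceless Hermitian involution on three qubits is unitarily conjugate to $Z_1$; your use of $c_{III}=0$ (tracelessness) to force at least weight one per block is a clean isolation of why only $a_1$ survives at total weight $2k$, giving $F_{2k}=a_1^k$ for an arbitrary block gate, with the Cauchy--Schwarz cap $a_1\le\sqrt3$ as a bonus showing how close to optimal the construction is. What each buys: your route is more general and makes the design principle transparent, while the paper's explicit enumeration delivers, in the same stroke, the full coefficient table and the closed formula of Eq.~(\ref{eq:Fw}) --- including $F_4=-1/2$ and the support/sign structure --- which are not needed for this lemma but are relied upon in Lemma~\ref{lemma-Properties} (alternating signs of the truncation error, vanishing beyond weight $4k$). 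Your generating-function bookkeeping $F_w=[x^{w-k}]\,g(x)^k$ with $g(x)=\tfrac32 x-\tfrac12 x^3$ would reproduce that formula with little extra work, so nothing in your approach obstructs the downstream use; as a proof of the lemma as stated, it is complete.
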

\begin{proof}
We consider the 3-qubit unitary that implements majority voting in the following way:

\begin{multline}
\\
V\ket{000}=\ket{000} \quad , \quad V\ket{100}=\ket{011} \\
V\ket{001}=\ket{001} \quad , \quad V\ket{101}=\ket{101} \\
V\ket{010}=\ket{010} \quad , \quad V\ket{110}=\ket{110} \\
V\ket{011}=\ket{100} \quad , \quad V\ket{111}=\ket{111}. \\
\end{multline}

\noindent One can check that $V^{\dagger}(Z_1)V = (Z_1 + Z_2+Z_3)/2 - Z_1Z_2Z_3/2$.

Then, applying directly Eq.\ref{eq:F_w} with $\rho_0 = \ket{000} \bra{000}$, we find that the only nonzero Pauli paths are of the form $s=(s_0,s_1)$, with $s_1=Z_1$, and $s_0 \in \{Z_1,Z_2,Z_3,Z_1Z_2Z_3\}$. Therefore, we obtain:
\[
F_2=\sum_{|s|=2} f(s)=\sum f(Z_1,Z_1)+f(Z_2,Z_1)+f(Z_3,Z_1)=3/2,
\]

\noindent and $F_4=\sum_{|s|=4} f(s)=\sum f(Z_1Z_2Z_3,Z_1)=-1/2,$ while $F_3=F_1=0$.

Now, We consider the initial state $\rho_0 = \left(\ket{0} \bra{0} \right)^{\otimes n}$ and the observable $O_k=\prod_{i=1}^k{Z_{3i-2}}$. Assuming, without loss of generality, that $n$ is a multiple of $3$, we consider the circuit $\mathcal{C}$ that results from applying a layer of $n/3$ $V$ gates, which is simply $V^{\otimes n/3}=V_{123} \otimes V_{456} \otimes ... \otimes V_{n-2,n-1,n}$, where $V_{i,j,k}$ represents a $V$ gates applied on qubits $i,j,k$. We then compute:

\begin{multline}
(V^{\dagger})^{\otimes n/3}O_k V^{\otimes n/3}=\left[\bigotimes_{i=1}^kV^{\dagger}_{3i-2,3i-1,3i}Z_{3i-2}V_{3i-2,3i-1,3i}\right]\otimes I^{\otimes n-3k}= \\
=\left[\bigotimes_{i=1}^k\frac{1}{2}\left(Z_{3i-2}+Z_{3i-1}+Z_{3i}-Z_{3i-2}Z_{3i-1}Z_{3i}\right) \right]\otimes I^{\otimes n-3k}.
\end{multline}

\noindent This product structure makes it easy to compute the coefficient $F_w$ by decomposing it into $k$ blocks of $3$ qubits each, since for each of the blocks we have already computed the coefficients $F_1,F_2,F_3,F_4$:

\[
F_w=\sum_{i_1+...+i_k=w}\left(F_{i_1}+...+F_{i_k}\right),
\]
with $i_1,...,i_k \in \{1,2,3,4\}$, and $F_1=F_3=0$, $F_2=3/2$, $F_4=-1/2$.

One can then check that $F_{2k}=(3/2)^k$. Computing all the terms, one obtains

\begin{align}\label{eq:Fw}
F_w=\begin{cases}
    \left(3/2\right)^{2k-w/2} \left(-1/2\right)^{w/2-k}{k \choose 2k-w/2} & \emph{for } w \emph{ even, } w \geq 2k\\
    0 & \emph{otherwise}
\end{cases}
\end{align}

\end{proof}

We will now study some properties of the circuit $\mathcal{C}$, which we state in the follwing Lemma:

\begin{lemma}[Properties of $\mathcal{C}$] \label{lemma-Properties} 
Consider the coefficient $F_w$ when the circuit $\mathcal{C}$ is applied on the observable $O_k=\prod_{i=1}^k{Z_{3i-2}}$, denoted by $F_w(\mathcal{C}_k)$, and the error after truncating the Pauli path method with a cut-off $\ell$, denoted by $E_{\mathcal{C}_k}^{\ell}$:
\[
E^{(\ell)}_{\mathcal{C}_k}=\sum_{w>\ell} F_w(\mathcal{C}_k)(1-p)^w.
\]
Then, the following properties hold:

\begin{enumerate}

\item For $w<2k$ it is $F_w(\mathcal{C}_k)=0$.

\item For $\ell<2k$ it is $E^{(\ell)}_{\mathcal{C}_k}=(3(1-p)^2/2 - (1-p)^4/2 )^k=(1-p)^{\Omega(\ell)}$.

\item For $\ell>4k$ it is $E^{(\ell)}_{\mathcal{C}_k}=0$.

\item For $2k < \ell \leq 4k$ the sign of $E^{(\ell)}_{\mathcal{C}_k}$ alternates:

\[
\mathrm{sgn}(E^{(\ell)}_{\mathcal{C}_k})=(-1)^{\frac{\ell - 2k}{2}}.
\]
\item For $2k < \ell \leq 9k/4$:
\[
\left|E^{(\ell)}_{\mathcal{C}_k}\right| \geq \left(\frac{3}{2}(1-p)^2\right)^k-1=2^{\Omega(\ell)}, \forall p<1-\sqrt{2/3}.
\]

\end{enumerate}
\end{lemma}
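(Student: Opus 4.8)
The plan is to reduce all five properties to the closed form for $F_w$ in Eq.~(\ref{eq:Fw}) together with the product structure of $\mathcal{C}$ over the $k$ disjoint three-qubit blocks. Setting $x=(1-p)^2/3\in(0,1/3)$ and $B=\tfrac32(1-p)^2$, a direct substitution of Eq.~(\ref{eq:Fw}) shows that the signed weight-$(2k+2j)$ contribution is $F_{2k+2j}(1-p)^{2k+2j}=B^{k}\binom{k}{j}(-x)^{j}$, so that the exact expectation value, the truncation $\langle O_k\rangle_\ell$, and the error $E^{(\ell)}_{\mathcal{C}_k}$ are all partial sums of the binomial expansion of $B^{k}(1-x)^{k}$. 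Properties 1 and 3 are then immediate from the support of $F_w$: Eq.~(\ref{eq:Fw}) vanishes for $w<2k$ and for $w>4k$, so once $\ell>4k$ nothing is truncated (giving $E^{(\ell)}_{\mathcal{C}_k}=0$), while for $\ell<2k$ everything is truncated.

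For Property 2, when $\ell<2k$ every nonzero path is discarded, so $E^{(\ell)}_{\mathcal{C}_k}$ equals the full sum; using the block factorization this is $\prod_{i=1}^{k}\bigl(F_2(1-p)^2+F_4(1-p)^4\bigr)=\bigl(\tfrac32(1-p)^2-\tfrac12(1-p)^4\bigr)^{k}$, which I would record compactly as $B^{k}(1-x)^{k}$. Since $\ell<2k$ forces $k=\Omega(\ell)$ and the base lies in $(0,1)$ for $p\in(0,1)$, this decays exponentially in $\ell$, giving the stated $(1-p)^{\Omega(\ell)}$ behaviour.

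Properties 4 and 5 are the crux, and I would treat them together by viewing the error as the tail of the truncated binomial series. With $\ell$ lying between two consecutive (even) path weights, $E^{(\ell)}_{\mathcal{C}_k}=B^{k}T$, where $T=\sum_{j\ge m}\binom{k}{j}(-x)^{j}$ is the remainder of the expansion of $f(x)=(1-x)^{k}$ after the first $m$ terms, $m=(\ell-2k)/2$. For the sign (Property 4) I would use the Lagrange form of the remainder: since $f^{(m)}(\xi)=(-1)^{m}\tfrac{k!}{(k-m)!}(1-\xi)^{k-m}$ has sign $(-1)^{m}$ for every $\xi\in(0,x)\subset(0,1)$, one gets $\mathrm{sgn}(T)=(-1)^{m}$, which is robust even where many terms are comparable and no single term dominates. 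For the magnitude (Property 5) I would apply the reverse triangle inequality $|E^{(\ell)}_{\mathcal{C}_k}|\ge|\langle O_k\rangle_\ell|-|\langle O_k\rangle_{\mathrm{err}}|=B^{k}|S_m|-\bigl(B(1-x)\bigr)^{k}$, where $S_m=\sum_{j=0}^{m}\binom{k}{j}(-x)^{j}$ is the kept partial sum. Two estimates close the argument: first, $B(1-x)=\tfrac32(1-p)^2-\tfrac12(1-p)^4\le 1$, so the true value is bounded by $1$; second, in the regime $\ell\le 9k/4$ one has $m\le k/8$, which sits strictly below the peak $j^\star\approx kx/(1+x)>k/6$ of the sequence $\binom{k}{j}x^{j}$ (using $x>2/9$, valid for $p<1-\sqrt{2/3}$), so these terms are increasing across the kept range. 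Grouping the alternating partial sum $S_m$ in consecutive pairs then yields $|S_m|\ge 1$, whence $|E^{(\ell)}_{\mathcal{C}_k}|\ge B^{k}-1$; since $B=\tfrac32(1-p)^2>1$ exactly when $p<1-\sqrt{2/3}$, and $k=\Omega(\ell)$ in this regime, this is $2^{\Omega(\ell)}$.

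The main obstacle is the lower bound in Property 5: unlike a convergent alternating series, the partial sums $S_m$ here can be enormous because the kept terms grow towards the peak, so the delicate point is not that $E^{(\ell)}_{\mathcal{C}_k}$ is nonzero but that it cannot be small. The monotonicity input --- that the truncation $\ell\le 9k/4$ keeps only indices $m\le k/8$ strictly below the peak of $\binom{k}{j}x^{j}$ --- is exactly what makes the pairing argument give $|S_m|\ge 1$ (equivalently, it guarantees the leading kept term $B^{k}$ is not accidentally cancelled by the remaining kept terms), and I expect verifying this monotonicity window together with the comparison $B(1-x)\le 1$ to be where the real care is needed. I would also double-check the parity bookkeeping relating $\ell$, $m=(\ell-2k)/2$, and the first discarded weight, since the alternation in Property 4 is sensitive to the exact convention for which weights are kept; the magnitude bound above is insensitive to this off-by-one, but the sign statement is not.
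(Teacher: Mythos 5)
Your proposal is correct in substance but takes a genuinely different route from the paper. Both proofs start from the closed form Eq.~(\ref{eq:Fw}) of Lemma \ref{lemma:V_gate}, and Properties 1--3 are handled the same way; the divergence is in Properties 4 and 5. The paper resums the tail into a Gauss hypergeometric function (Eq.~(\ref{eq:partial_sum})), verifies Property 2 by evaluating that formula at $\ell=2k-2$, obtains the sign in Property 4 from the transformation in Eq.~(\ref{eq:hypergeometric}) exhibiting ${}_2F_1$ as a sum of positive terms, and obtains Property 5 from a monotonicity-in-$\ell$ claim for $\left|{}_2F_1\right|$ that is asserted rather than spelled out. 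You instead note that $F_{2k+2j}(1-p)^{2k+2j}=B^{k}\binom{k}{j}(-x)^{j}$ with $B=\tfrac{3}{2}(1-p)^2$ and $x=(1-p)^2/3$ (this identity is correct), so every quantity in the lemma is a partial sum of the binomial expansion of $B^{k}(1-x)^{k}$: Property 2 reduces to $\sum_j \binom{k}{j}(-x)^j=(1-x)^k$, Property 4 to the Lagrange remainder of $(1-y)^k$ (sign $(-1)^{m}$, $m$ the first discarded index), and Property 5 to the reverse triangle inequality together with the observation that $p<1-\sqrt{2/3}$ forces $x>2/9$, which places the kept indices $m\le k/8$ strictly below the peak of $a_j=\binom{k}{j}x^j$. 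This is more elementary and self-contained than the paper's hypergeometric identities, and the Lagrange-remainder sign argument is more robust, since it needs no term-by-term domination.

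Two caveats. First, your pairing claim $|S_m|\ge 1$ is clean only when $m$ is even: pairing from the top leaves the unpaired term $a_0=1$, so $S_m\ge 1$. For odd $m$ the same pairing gives $|S_m|\ge a_1-a_0=kx-1$, which exceeds $1$ only once $kx\ge 2$; this holds in the asymptotic regime of interest (since $x>2/9$ and $k=\Theta(\ell)\to\infty$), but it means the exact prefactor $\bigl(\tfrac{3}{2}(1-p)^2\bigr)^k-1$ of Property 5 is recovered only up to a constant in the odd-parity case, while the asymptotic conclusion $2^{\Omega(\ell)}$ is unaffected. Second, your worry about the parity bookkeeping in Property 4 is well founded and in fact cuts against the paper itself: with the lemma's convention $E^{(\ell)}=\sum_{w>\ell}F_w(1-p)^w$, a direct check at $k=2$, $\ell=6$ gives $E^{(6)}=F_8(1-p)^8=\tfrac{1}{4}(1-p)^8>0$, whereas the stated formula gives $(-1)^{(\ell-2k)/2}=-1$; the paper's own Eq.~(\ref{eq:partial_sum}) (whose leading factor is $-(-1)^{(\ell-2k)/2}$) agrees with the direct computation, so the lemma's sign statement is off by one unit of the alternation. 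Your formulation, sign $=(-1)^{j_{\min}}$ with $j_{\min}$ the first discarded index, is the correct convention-independent statement.
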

\begin{proof}
The expression for $F_w(\mathcal{C}_k)$ is given by Eq. \ref{eq:Fw}, which already proves property 1 and 3. The term $E^{(\ell)}_{\mathcal{C}_k}$ can be obtained by performing a sum over the expression of $F_w(\mathcal{C}_k)$, which can be expressed in terms of the hypergeometric function ${}_2 F_1(a,b;c;z)$ (assuming, without loss of generality, that $k$ and $\ell$ are even):

\begin{multline}\label{eq:partial_sum}
E^{(\ell)}_{\mathcal{C}_k}=-(-1)^{\frac{\ell-2k}{2}} \left(2^{-k}\right)  \left(3^{2k - \frac{\ell}{2}-1} \right)  (1-p)^{\ell+2} {k \choose 2k-\ell/2-1} \times \\ \times {}_2 F_1(1,-2k + \ell/2+1;2-k+\ell/2;(1-p)^2/3).
\end{multline}

\noindent Property 2 can be verified by checking this formula when $\ell=2k-2$. Property 4 follows from the fact that all the terms in the formula above are positive except for $-(-1)^{\frac{\ell-k}{2}}$, which alternates between positive and negative when $\ell$ is increased. The hypergeometric function can be checked to be positive by the transformation

\begin{multline}\label{eq:hypergeometric}
{}_2 F_1(1,-2k + \ell/2+1;2-k+\ell/2;(1-p)^2/3) = \\
= \left[1-\frac{(1-p)^2}{3}\right]^k {}_2 F_1(-k+\ell/2+1,1+k;2-k+\ell/2;(1-p)^2/3),
\end{multline}
which is a sum of positive terms when $2k < \ell \leq 4k$.

Finally, property 5 can be verified by inspection of Eq. (\ref{eq:hypergeometric}): the term $|{}_2 F_1(-k+\ell/2+1,1+k;2-k+\ell/2;(1-p)^2/3)|$ can be checked to be monotonically increasing with $\ell$. As a consequence, using Eq. (\ref{eq:partial_sum}), $|E^{(\ell)}_{\mathcal{C}_k}|$ can be checked to be monotonically increasing with $\ell$, when $2k < \ell \leq 9k/4$ and $p<1-\sqrt{2/3}$. Finally, by direct application of the formula, one finds that $\left|E_{\mathcal{C}_k}^{(\ell)}\right| \geq \left|E_{\mathcal{C}_k}^{(\ell=2k-2)}\right| \geq \left(3(1-p)^2/2\right)^k - 1=2^{\Omega(\ell)}$, when $p<1-\sqrt{2/3}$ which completes the proof.

\end{proof}

With these ingredients, we are ready to construct a circuit which exhibits a large error when the cut-off is $\ell=\Theta(\log n)$:

\begin{lemma}\label{lemma:counterexample} Consider the initial product state $\rho_0 = \left(\ket{0} \bra{0} \right)^{\otimes n}$. Then, there exists a quantum circuit $\mathcal{C}$, and an observable $O$, expressed as a linear combination of $g(n)=\Theta\left((\log n)^2\right)$ Pauli observables, $O=\frac{1}{g(n)}\sum_{k=1}^{g(n)}O_{4k}$, such that the error incurred by the Pauli path based classical algorithm scales exponentially with the cut-off $\ell$, when $\ell=O(\log n)$:

\[
\left| \sum_{w>\ell} F_w (1-p)^w \right| = 2^{\Omega(\ell)}/\left(\log n\right)^2, \forall p<1-\sqrt{2/3}.
\]

\end{lemma}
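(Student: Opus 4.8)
The plan is to reduce the statement to a lower bound on a single signed sum of the per-observable truncation errors and then exploit the sign structure of Lemma~\ref{lemma-Properties} to rule out cancellations. By linearity of the Pauli-path expansion, the weight-$w$ coefficient of $O=\frac{1}{g(n)}\sum_{k=1}^{g(n)}O_{4k}$ is $F_w=\frac{1}{g(n)}\sum_k F_w(\mathcal{C}_{4k})$, so the truncation error factorizes as $\sum_{w>\ell}F_w(1-p)^w=\frac{1}{g(n)}\sum_{k=1}^{g(n)}E^{(\ell)}_{\mathcal{C}_{4k}}$. Since all nonzero $F_w(\mathcal{C}_{4k})$ have even $w$ by Eq.~\eqref{eq:Fw}, I may assume the cutoff $\ell$ is even. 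I would then classify each index $k$ by how $\ell$ compares with the support $[\,2(4k),4(4k)\,]=[8k,16k]$ of $F_w(\mathcal{C}_{4k})$: for $k<\ell/16$ the term vanishes (Property 3), for $\ell/16<k<\ell/8$ it is a genuinely truncated ``partial'' term, and for $k>\ell/8$ it equals the full, untruncated expectation value (Property 2).

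The crux is that the partial terms cannot cancel. By Property 4, for $2m<\ell\le 4m$ one has $\mathrm{sgn}(E^{(\ell)}_{\mathcal{C}_m})=(-1)^{(\ell-2m)/2}$; substituting the \emph{even} index $m=4k$ gives $(-1)^{\ell/2-4k}=(-1)^{\ell/2}$, independent of $k$. This is precisely why the construction uses $O_{4k}$ rather than $O_k$. Hence every partial term carries the common sign $(-1)^{\ell/2}$, so $\bigl|\sum_{\ell/16<k<\ell/8}E^{(\ell)}_{\mathcal{C}_{4k}}\bigr|=\sum_{\ell/16<k<\ell/8}\bigl|E^{(\ell)}_{\mathcal{C}_{4k}}\bigr|$: a sum of magnitudes with no interference, hence at least as large as any single summand. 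I would then pick any integer $k^\ast$ in the sub-window $\ell/9\le k^\ast<\ell/8$, which is nonempty once $\ell\gtrsim 72$ (hence for all large $n$, as $\ell=\Theta(\log n)$) and lies inside $k\le g(n)$ because $g(n)=\Theta((\log n)^2)\ge \ell/8$. Property 5 gives $\bigl|E^{(\ell)}_{\mathcal{C}_{4k^\ast}}\bigr|\ge (3(1-p)^2/2)^{4k^\ast}-1$; since $3(1-p)^2/2>1$ for $p<1-\sqrt{2/3}$ and $k^\ast\ge\ell/9$, this is $2^{\Omega(\ell)}$, so the partial block alone contributes $\pm\,2^{\Omega(\ell)}$.

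It remains to verify that the full-value tail cannot erase this. Each $k>\ell/8$ term equals $\bigl(\tfrac32(1-p)^2-\tfrac12(1-p)^4\bigr)^{4k}$ (Property 2); writing $q=(1-p)^2\in(2/3,1]$ this is $(q(3-q)/2)^{4k}$ with base in $(7/9,1]$, hence positive. For $p>0$ the base is $<1$ and the tail is a convergent geometric series summing to $O(1)$; for $p=0$ each term is $1$ and there are at most $g(n)=\Theta((\log n)^2)$ of them. In either case the tail is $O((\log n)^2)$, negligible against the $2^{\Omega(\ell)}=n^{\Omega(1)}$ partial contribution, so $\bigl|\sum_k E^{(\ell)}_{\mathcal{C}_{4k}}\bigr|\ge 2^{\Omega(\ell)}-O((\log n)^2)=2^{\Omega(\ell)}$; dividing by $g(n)=\Theta((\log n)^2)$ yields the claimed $2^{\Omega(\ell)}/(\log n)^2$. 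The main obstacle, and the entire purpose of the construction, is controlling cancellation among the exponentially large partial terms: the even-index sign alignment from Property 4 is what forces them to add coherently, while the positivity and smallness from Property 2 bound the remaining tail, the one technical point being to confirm that this polylogarithmic tail is dominated for every $p<1-\sqrt{2/3}$, including the noiseless case $p=0$.
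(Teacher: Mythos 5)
Your proposal is correct and follows essentially the same route as the paper's proof: decompose the error by linearity into per-observable truncation errors, classify indices $k$ by how $\ell$ sits in the support $[8k,16k]$, use Property 4 (with the multiples-of-4 trick) to force a common sign on all genuinely truncated terms, apply Property 5 to a $k^\ast$ with $8k^\ast<\ell\le 9k^\ast$ (the paper's set $\mathcal{B}_\ell$), and argue the untruncated tail cannot cancel the $2^{\Omega(\ell)}$ contribution. If anything, you are slightly more careful than the paper on two points it glosses over --- the explicit $O((\log n)^2)$ bound on the full-value tail (including $p=0$) via the reverse triangle inequality, and the nonemptiness of the window $[\ell/9,\ell/8)$ for large $\ell$ --- though both you and the paper leave the boundary index $k=\ell/8$ (when $8\mid\ell$) unclassified, a harmless omission since that term's sign matches the partial terms' sign as given by Eq.~(\ref{eq:partial_sum}).
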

\begin{proof}

We will consider the circuit $\mathcal{C}$ from Lemma \ref{lemma:V_gate}, the initial state $\rho_0=\left(\ket{0}\bra{0}\right)^{\otimes n}$, and the observable 
\[
O=\frac{1}{g(n)}\sum_{k=1}^{g(n)}O_{4k},
\]
for some function $g(n)=\Theta\left((\log n)^2\right)$. Here,
$O_{4k}=\prod_{i=1}^{4k}{Z_{3i-2}}$.

We denote the error by truncating the Pauli path method at a cut-off $\ell$ by

\[
E_{\mathcal{C}}^{(\ell)}=\sum_{w>\ell}F_w(1-p)^w,
\]

\noindent where we denote by $F_w$ the coefficient $F_w$ obtained for the circuit $\mathcal{C}$ and the initial observable $O$. We will also denote by $F_w(\mathcal{C}_{4k})$ the coefficient $F_w$ obtained for the circuit $\mathcal{C}$ and the initial observable $O_{4k}$. Furthermore, since $O$ can be written as a convex combination of the Pauli strings $O_{4k}$, one gets:

\[
F_w=\frac{1}{g(n)}\sum_{k=1}^{g(n)}F_w(\mathcal{C}_{4k}),
\]

\noindent and 

\[
E_{\mathcal{C}}^{(\ell)}=\frac{1}{g(n)}\sum_k E_{\mathcal{C}_{4k}}^{(\ell)}.
\]

%From Eq. \ref{eq:relation_Ck_Vk} we know that $E_{C_{k}}^{(\ell)}$ is obtained by shifting the weights by the quantity $S(k)=D+2k-3 - \lceil \log_2 k \rceil$:

%\[
%E_{C_{k}}^{(\ell)}=(1-p)^{S(k)}E_{V_{k}}^{(\ell-S(k))}
%\]

\noindent From properties 2 and 3 of Lemma \ref{lemma-Properties}, we deduce that:

\begin{align}
\left|E_{\mathcal{C}_{4k}}^{(\ell)}\right|=\begin{cases}
    (1-p)^{\Omega(\ell)} & \emph{ if } \ell <  8k\\
    0 & \emph{ if } \ell >  16k.
    \end{cases}
\end{align}

\noindent We denote by $\mathcal{A}_\ell$ the set of integers $k$ such that $8k < \ell< \min(16k,g(n))$, and by $\mathcal{B}_\ell$ the set of integers $k$ such that $8k < \ell< \min(9k,g(n))$. Then:

\[
\left|E_\mathcal{C}^{(\ell)}\right|=\left|\frac{1}{g(n)}\sum_{k \in \mathcal{A}_\ell} E_{\mathcal{C}_{4k}}^{(\ell)}\right| + (1-p)^{\Omega(\ell)}.
\]

\noindent Furthermore, we notice that, due to property 4 of Lemma \ref{lemma-Properties}:
\[
\mathrm{sgn}(E_{\mathcal{C}_{4k}}^{(\ell)})=\mathrm{sgn}(E_{\mathcal{C}_{4j}}^{(\ell)}) \forall k,j\in \mathcal{A}_\ell.
\]

\noindent This is due to property 4 of Lemma \ref{lemma-Properties}. Since $\mathcal{C}_{4k}$ and $\mathcal{C}_{4j}$ only differ by a multiple of $4$, the signs of their coefficients $F_w$ will coincide. Therefore, since $\mathcal{B}_\ell \subset \mathcal{A}_\ell$:

\[
\left|\frac{1}{g(n)}\sum_{k \in \mathcal{A}_\ell} E_{\mathcal{C}_{4k}}^{(\ell)} \right| \geq  \left| \frac{1}{g(n)}\sum_{k \in \mathcal{B}_\ell} E_{\mathcal{C}_{4k}}^{(\ell)}  \right| \geq \max_{k \in \mathcal{B}_\ell} \left| E_{\mathcal{C}_{4k}}^{(\ell)} \right|/g(n).
\]

This bound is only meaningful if the set $\mathcal{B}_\ell$ is not empty. However, one can check that, for any $\ell$, $\mathcal{B}_\ell \neq \emptyset $, since one can always find a $k \in \mathbb{N}$ such that $8k < \ell < 9k$. 
Applying property 5 of Lemma \ref{lemma-Properties} then yields:
\[
\max_{k \in \mathcal{B}_\ell} \left| E_{\mathcal{C}_{4k}}^{(\ell)} \right|=2^{\Omega(\ell)}/g(n), \forall p<1-\sqrt{2/3},
\]

\noindent and therefore:
\[
\left|E_\mathcal{C}^{(\ell)}\right| = 2^{\Omega(\ell)}/g(n),
\]
 which proves the lemma.
 \end{proof}

Finally, we can restate this Lemma to produce our no-go result: in general, a cut-off of $\ell=\Theta(\log n)$ is not enough to guarantee that the error in the simulation is small. As a consequence, since $\ell=\omega(\log n)$ is needed, the running time will be superpolynomial in general:

\begin{theorem}[No-go result] \label{theorem:counterexample}
Consider the Pauli paths method to compute the expectation value of a convex combination of $g(n)=O(\mathrm{poly}(n))$Pauli observables $O=\sum_{k=1}^{g(n)} a_k O_k$, applied with a cut-off $\ell=\Omega(\log n)$, and with the error given by
$\mathrm{Error}=\left|\left<O\right>_{\ell} - \left<O\right>_{\mathrm{err}} \right|$. Then, it is necessary that the cut-off scales as $\ell=\omega(\log n)$ to ensure that, for all error rates and for any $O(\log n)$ depth quantum circuit, there exists a constant $\epsilon=\Theta(1)$ such that $\mathrm{Error} \leq \epsilon$.
\end{theorem}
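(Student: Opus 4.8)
The plan is to derive this statement as a direct corollary of Lemma~\ref{lemma:counterexample}, which already supplies an explicit family of instances whose truncation error is pathologically large. The theorem is just a contrapositive repackaging of that lemma: instead of exhibiting one bad instance, it asserts that no choice of $\ell=\Theta(\log n)$ can be safe across all circuits and noise rates, so the cut-off must be pushed to $\ell=\omega(\log n)$.

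First I would fix the witness instance: the single-layer circuit $\mathcal{C}$ built from the majority-vote gate $V$ of Lemma~\ref{lemma:V_gate}, together with the observable $O=\frac{1}{g(n)}\sum_{k=1}^{g(n)}O_{4k}$ with $g(n)=\Theta\big((\log n)^2\big)$, exactly as in Lemma~\ref{lemma:counterexample}. I would then verify that this instance meets every hypothesis of the theorem: the circuit has depth $1=O(\log n)$, the observable is a convex combination of $g(n)=\Theta\big((\log n)^2\big)=O(\mathrm{poly}(n))$ Pauli strings, and the conclusion of Lemma~\ref{lemma:counterexample} holds for every rate $p<1-\sqrt{2/3}$, a nonempty interval of noise rates. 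This single witness therefore defeats the ``for all error rates'' quantifier in the theorem.

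Next I would substitute $\ell=\Theta(\log n)$ into the bound $|E_{\mathcal{C}}^{(\ell)}|=2^{\Omega(\ell)}/(\log n)^2$ guaranteed by Lemma~\ref{lemma:counterexample}. The key arithmetic is that $2^{\Omega(\log n)}=n^{\Omega(1)}=\mathrm{poly}(n)$ grows polynomially while the denominator is only polylogarithmic, so
\[
\mathrm{Error}=\frac{n^{\Omega(1)}}{(\log n)^2}=\omega(1),
\]
which eventually exceeds any fixed constant $\epsilon=\Theta(1)$. Hence a cut-off at the lower boundary $\ell=\Theta(\log n)$ of the assumed regime $\ell=\Omega(\log n)$ cannot keep the error constant-bounded. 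I would also record that the dominant contribution comes from a $k\in\mathcal{B}_\ell$ with $k\approx\ell/8=\Theta(\log n)\ll g(n)$ (using the sign-coherence of property~4 and the monotonicity in property~5 of Lemma~\ref{lemma-Properties}), so $\mathcal{B}_\ell$ is nonempty and the bound genuinely applies throughout the range $\ell=\Theta(\log n)$.

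Finally I would close by contraposition: since every $\ell=\Theta(\log n)$ fails to control the error on this instance, any cut-off that uniformly enforces $\mathrm{Error}\le\epsilon$ over all circuits and rates must satisfy $\ell=\omega(\log n)$. The only point demanding care, and hence the main obstacle, is the clean handling of the asymptotic quantifiers — making sure the claim ``$2^{\Omega(\ell)}$ dominates $(\log n)^2$'' is invoked in the correct regime and that the hidden constant in $\Omega(\cdot)$ does not covertly depend on $n$. This is bookkeeping rather than genuine difficulty, because all the substantive work (the explicit $F_w$ of Lemma~\ref{lemma:V_gate} and the sign and monotonicity analysis of Lemma~\ref{lemma-Properties} feeding into Lemma~\ref{lemma:counterexample}) has already been carried out.
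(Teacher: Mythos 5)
Your proposal is correct and follows essentially the same route as the paper: the theorem is proved there as a direct corollary of Lemma~\ref{lemma:counterexample}, using the same witness instance and the same observation that $2^{\Omega(\ell)}/(\log n)^2$ exceeds any constant when $\ell=\Theta(\log n)$. Your additional bookkeeping (verifying the hypotheses and locating the dominant $k\in\mathcal{B}_\ell$) is consistent with, and slightly more explicit than, the paper's two-sentence argument.
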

\begin{proof}
The proof follows directly from Lemma \ref{lemma:counterexample}. There, a quantum circuit $\mathcal{C}$ is constructed such that, for $\ell=\Theta(\log n)$ and $p<1-\sqrt{2/3}$, the error scales as $\mathrm{Error}=2^{\Omega(\ell)}/(\log n)^2$. Hence, if one wants to ensure that for small error rates the error is bounded by a constant $\mathrm{Error} \leq \epsilon$, one has compute more terms of the Pauli path sum, setting at least $\ell=\omega(\log n)$.
\end{proof}

\section{Conclusions}
We have studied a Pauli path based algorithm to compute expectation values of noisy quantum circuits. This algorithm leverages the fact that the Pauli paths with small weight can be computed efficiently on a classical computer, while the paths with large weight decay rapidly with noise. It is known that this method works for generic instances of noisy quantum circuit, but its worst-case performance remains unexplored. Using this method we have shown that, for circuits comprised of Clifford + $T$ gates, there is a threshold error rate beyond which the classical computation of local observables is efficient. The threshold error rate depends roughly on both the fraction of $T$ gates in the circuit and their geometric distribution. This result illustrates the ``competition'' between the quantum magic and the noise in the system: if noise is introduced at a greater rate than quantum magic, then classical simulation becomes possible in all cases. However, this result might not be practical in its current form, since it leads to stringent conditions on the $T$ gates, and a large exponential in the running time. It remains an interesting question whether this can be improved. As an application of these results, we study 2D QAOA circuits that attempt to find low-energy state of classical Ising Hamiltonians on bounded degree graphs. We find that, on hard instances, which correspond to the Ising model's graph being geometrically non-local, these circuits are classically easy to simulate for any constant noise rate, since they are dominated by SWAP gates.

We have also explored the limitations of the algorithm, and provide an explicit instance of the problem that is classically trivial to simulate, but where naive application of the algorithm fails nonetheless. This serves as a simple example of how the truncations in the algorithm can lead to exponentially growing errors.
Finally, we have initiated a study of the trade-off that exists between classical complexity and sensitivity to noise in a subset of quantum circuits. We believe that it remains an interesting open problem whether this can be generalized further: are there circuits that are both classically hard yet somewhat robust to noise in the NISQ setting?

\emph{Note:} Concurrently with the preparation of this manuscript, a new average-case analysis of the Pauli paths method appeared \cite{schuster2024polynomialtime}. There, the classical Pauli path algorithm was shown to approximate noisy expectation values of fixed quantum circuits with small average error for initial states randomly drawn from an ensemble, for any constant error rate. While their main result is an average-case analysis, they also analyze worst-case performance on the more restricted computational model of noisy $\mathrm{DQC}1$, showing classical simulability. Our work complements this result by performing a worst-case analysis on a different family of quantum circuits, comprised of fixed circuits and fixed initial (product) states.

\begin{acknowledgements}
The research is part of the Munich Quantum Valley, which is supported by the Bavarian State Government with funds from the High tech Agenda Bayern Plus. We acknowledge funding from the German Federal Ministry of Education and Research (BMBF) through EQUAHUMO (Grant No. 13N16066) within the funding program quantum technologies—from basic research to market.
\end{acknowledgements}

\bibliographystyle{quantum}
\bibliography{references.bib}

\appendix

 \section{Root analysis}\label{section:root_analysis}
In this section we analyze the relation between the stability to noise and the classical complexity of quantum circuits. While in the general case the relation is uncertain, we find a subclass of circuits that exhibit a trade-off: if the computation of expectation values of observables is classically hard, then the circuit is highly fragile to noise, and the noisy expectation value vanishes superpolynomially fast in the system size.

Recalling Eq.(\ref{eq:polynomial_O}), we write:
\[
 \left<O\right>_{\mathrm{err}}=\sum_w F_w(1-p)^w.
 \]

\noindent This is a polynomial in $p$, and we will analyze the fragility to noise of $ \left<O\right>_{\mathrm{err}}$ as a function of the roots. Specifically, if all the roots are real, there are only two possibilities: either $\left<O\right>_{\mathrm{err}}$ is classically easy to compute, or it is highly sensitive to noise.  However, we note that the assumption that all the roots are real is not general. Furthermore, since computing the roots is as hard as computing the expectation value, it is not easy to characterize the subset of circuits with real roots. However, it is trivial to provide circuits that fulfill this condition: for example, all Clifford circuits will have all roots in $p=1$.

In our analysis, we will denote 
\[
\|F\|_2=\sqrt{\sum_w F_w^2}.
\]

\noindent We will also denote $x=1-p$, and write the polynomial as:

\[
Q(x)=\sum_{w=d}^{M}F_wx^w,
\]

\noindent with $M$ being the degree of the polynomial.
The quantity $\|F\|_2$ is the basis of our analysis. We will show that, if $\|F\|_2=O(\mathrm{poly(n)})$, the classical simulation is always easy, while if $\|F\|_2=\omega(\mathrm{poly(n)})$, the expectation value is $1/\omega(\mathrm{poly}(n))$ small, for small constant error rates, which implies fragility to noise.

We begin by bounding the number of roots of $Q(x)$ that are within a constant distant of the origin, as a function of $\|F\|_2$:

 \begin{lemma}\label{lemma:roots_origin}
For circuits comprised of Clifford and $T$ gates, the polynomial $Q(x)$ contains $\Omega(\log \|F\|_2)$ roots within a distance $R=\Theta(1)$ of the origin.

 \end{lemma}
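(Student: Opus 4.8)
The plan is to read off from $Q(x)=\sum_{w=d}^{M}F_w x^w$ two quantitative facts and feed them into Jensen's formula. First, Parseval's identity on the unit circle gives $\frac{1}{2\pi}\int_0^{2\pi}|Q(e^{i\theta})|^2\,d\theta=\sum_w F_w^2=\|F\|_2^2$, so $\sup_{|z|=1}|Q(z)|\ge\|F\|_2$: somewhere on the unit circle $Q$ is at least as large as its coefficient norm. Second, because the gate set is Clifford$+T$, a nonzero path branches only at $T$ gates and each $T$ gate at most doubles the number of strings (Cliffords map a Pauli string to a single Pauli string), while locality forces the support of any weight-$w$ path to be connected of size $O(w)$; this is exactly the counting of Lemma~\ref{lemma:counting_geometric}, and it yields $|F_w|\le N_w\le C^w$ for a constant $C=\Theta(1)$. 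Consequently $\|F\|_2^2\le\sum_{w\le M}C^{2w}=O(C^{2M})$, so the degree already satisfies $M\ge \log\|F\|_2/\log C-O(1)=\Omega(\log\|F\|_2)$; the content of the lemma is to show that $\Omega(\log\|F\|_2)$ of these $M$ roots in fact lie within a constant radius of the origin.

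To localize the roots I would apply Jensen's formula on a disk of radius $R=\Theta(1)$ large enough to contain the real segment $[0,1]$. Writing $a_1,\dots,a_M$ for the roots (the factor $x^d$ contributing $d$ of them exactly at the origin),
\[
\frac{1}{2\pi}\int_0^{2\pi}\log\bigl|Q(Re^{i\theta})\bigr|\,d\theta=\log|F_d|+d\log R+\sum_{0<|a_k|<R}\log\frac{R}{|a_k|}.
\]
Since each term $\log(R/|a_k|)$ with $|a_k|$ bounded below by a constant is $O(1)$, a lower bound of order $\log\|F\|_2$ on the left-hand side would force $\Omega(\log\|F\|_2)$ roots inside $|x|<R$, which is the claim. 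The remaining task is therefore to lower-bound the logarithmic (geometric-mean) integral of $|Q|$ on the circle of radius $R$ by $\Omega(\log\|F\|_2)$.

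The hard part is precisely this last step, because the Jensen integral is an average of $\log|Q|$ and can be far smaller than $\sup|Q|\ge\|F\|_2$; large coefficient norm alone does not suffice (the polynomial $1+x+\dots+x^{k-1}$ has $\|F\|_2=\sqrt k$ yet unit Mahler measure). The ingredient that breaks this degeneracy is the operator-norm bound $|Q(x)|=\bigl|\langle O\rangle_{\mathrm{err}}(1-x)\bigr|\le\|O\|$ valid for all $x\in[0,1]$: the polynomial must stay bounded on a whole interval near the origin while being as large as $\|F\|_2$ nearby on the unit circle. I would make this quantitative through a Bernstein--Walsh / potential-theoretic estimate for the complement of $[0,1]$, converting ``bounded on $[0,1]$ but of size $\|F\|_2$ at distance $\Theta(1)$'' into ``$\Omega(\log\|F\|_2)$ zeros clustering within $R=\Theta(1)$ of $[0,1]$'', hence of the origin. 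The delicate book-keeping lies in using the coefficient growth bound $|F_w|\le C^w$ together with the relation $\prod_{b_k\ne 0}|b_k|=|F_d/F_M|$ to keep the nonzero roots in a constant annulus, so that neither a handful of roots escaping to large modulus nor a single root collapsing onto the origin can absorb all the Jensen mass and spoil the \emph{counting} lower bound; reconciling this control with the geometric-versus-arithmetic-mean gap is where the real technical work sits.
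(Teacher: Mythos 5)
Your opening step---using $|F_w|\le C^w$ from the path-counting lemma to conclude $\|F\|_2^2\le\sum_w C^{2w}=2^{O(M)}$ and hence $M=\Omega(\log\|F\|_2)$---is exactly the paper's first step. But after that the proposal has a genuine gap, and you say so yourself: the entire content of the lemma (turning largeness of $\|F\|_2$ into a \emph{count} of roots at constant distance from the origin) is deferred to a Jensen/Bernstein--Walsh argument that is never carried out. As written, those tools do not close it. Jensen's formula lower-bounds only the \emph{sum} $\sum\log(R/|a_k|)$, which, as you correctly observe, can be dominated by a single root near the origin or starved by roots escaping to large modulus, so it yields no counting bound; and Bernstein--Walsh controls the growth of a polynomial bounded on $[0,1]$ \emph{from above} off the interval, which again only reproduces the degree bound $M=\Omega(\log\|F\|_2)$, not a localization of zeros. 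You have identified the obstructions accurately (your example $1+x+\dots+x^{k-1}$ is a good illustration that $\sup_{|z|=1}|Q|\ge\|F\|_2$ alone is insufficient), but identifying them is not overcoming them.

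The ingredient you are missing is a \emph{lower bound on the leading coefficient}, which is where the Clifford$+T$ structure enters a second time. For such circuits every nonzero path amplitude $f(s)$ is a signed power of $1/\sqrt{2}$, so the top coefficient, being a nonzero sum of such terms, satisfies $|F_M|\ge 2^{-M/2}$. This is what rules out your ``roots running off to infinity'' scenario: you invoke $\prod_k |b_k|=|F_d/F_M|$ but never bound $|F_M|$, and without such a bound no coefficient-based localization can succeed (a tiny $F_M$ lets all roots escape any fixed disk). With it, the paper finishes in one line by citing a standard result from the analytic theory of polynomials (Rahman--Schmeisser, Corollary 9.1.2), which bounds the modulus of the $k$-th root by $\left(\|F\|_2/|F_M|\right)^{1/(M-k-1)}$; choosing $k=1+M/2$ gives $|r_{1+M/2}|\le\left(\sqrt{2}\,\|F\|_2^{1/M}\right)^2=O(1)$ since $\|F\|_2^{1/M}=O(1)$, so at least $M/2=\Omega(\log\|F\|_2)$ roots lie in a disk of constant radius. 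No Jensen formula, potential theory, or use of the bound $|Q|\le 1$ on $[0,1]$ is needed; the proof is purely a coefficient argument, and the two coefficient bounds (growth from path counting, and discreteness of the top coefficient) are precisely the facts your sketch lacks.
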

\begin{proof}
We will begin by bounding $M$, the degree of $Q(x)$. Using that $|F_w| \leq 2^w$ (from the counting argument of Lemma \ref{lemma:counting_geometric}), we can bound:

\[
\|F\|_2^2=\sum_{w=d}^M |F_w|^2 \leq \sum_{w=d}^M 4^w =2^{O(M)}.
\]
This implies that $M=\Omega\left(\log \|F\|_2\right)$.

We consider the coefficient $F_M$. We realize that $|F_M| \geq \sqrt{2}^{-M}$, since there are only Clifford and $T$ gates in the circuit. One can then use Corollary 9.1.2 of \cite{rahman2002analytic} to bound the $k$-th root of $Q(x)$:

\[
|r_k| \leq \left( \frac{\|F\|_2}{|F_M|} \right)^{\frac{1}{M-k-1}} \leq \left(\sqrt{2}^M \|F\|_2\right)^{\frac{1}{M-k-1}}.
\]

\noindent Picking $k=1+M/2$ this yields

\[
|r_{1+M/2}| \leq \left(\sqrt{2}\|F\|_2^{1/M}\right)^{2}=R.
\]

\noindent with $R=O(1)$ because $\|F\|_2^{1/M}=O(1)$, which proves the Lemma. 
\end{proof}

We continue by showing that, if $g(n)$ roots are within a constant distance of the radio, and all the roots are real, $Q(x)$ must decay exponentially with $g(n)$:

 \begin{lemma} \label{lemma:polynomial_and_roots}
 Consider a polynomial $Q(x)$ with real roots $r_1, ... , r_M$, with $|Q(x)| \leq 1 \forall x \in [0,1]$, a constant $R=\Theta(1)$, and a small constant $\epsilon \ll 1$. If there are at least $g(n)=\omega(\mathrm{poly}(n))$ roots within a distance $R$ of the origin, no roots in the interval $x \in [1-\epsilon,1)$, and $|Q(1)| = \Omega (\frac{1}{\mathrm{poly}(n)})$, then

 \[
 \left|Q(x)\right| = e^{-\Omega(g(n))x^2}, \forall x \in [1-\epsilon,1).
 \]
 \end{lemma}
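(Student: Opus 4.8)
The plan is to work with the factorized form of the polynomial, using that $x=1$ is not a root. Since $|Q(1)|=\Omega(1/\mathrm{poly}(n))\neq 0$, I can write
\[
Q(x)=Q(1)\prod_{j=1}^{M}\frac{x-r_j}{1-r_j},
\]
so that bounding $|Q(x)|$ on $x\in[1-\epsilon,1)$ reduces to bounding a product of scalar factors. The first step is to classify the (real) roots by their location: since there are no roots in $[1-\epsilon,1)$, every root obeys either $r_j<1-\epsilon$ or $r_j\ge 1$. A root with $r_j<1-\epsilon$ satisfies $r_j\le x$, hence $0\le x-r_j<1-r_j$ and its factor is a contraction, $|(x-r_j)/(1-r_j)|\in[0,1)$; the roots with $r_j\ge 1$ are the only ones whose factor can exceed $1$.

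The second step extracts the decay from the roots near the origin. Of the $\ge g(n)$ roots with $|r_j|\le R$, all but at most the $m_+$ roots lying in $[1,R]$ fall strictly to the left of the interval, and for each of those, using $1-r_j\le 1+R$,
\[
\left|\frac{x-r_j}{1-r_j}\right|=1-\frac{1-x}{1-r_j}\le 1-\frac{1-x}{1+R}\le e^{-(1-x)/(1+R)}.
\]
Multiplying over the $\ge g(n)-m_+$ contracting near roots gives a factor at most $\exp\!\big(-(g(n)-m_+)(1-x)/(1+R)\big)$. Together with $|Q(1)|=\Omega(1/\mathrm{poly}(n))$ this yields $|Q(x)|\le \exp\!\big(-\Omega(g(n))(1-x)\big)$, which over the range $1-x=\Theta(1)$ is $\exp(-\Omega(g(n)))$ and hence matches the claimed $e^{-\Omega(g(n))x^2}$ bound on $[1-\epsilon,1)$ (where $x^2=\Theta(1)$), up to the accounting of the roots in $[1,R]$ and the remaining far roots.

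The main obstacle, and the last step, is to show that the enlarging factors from the roots with $r_j\ge 1$ cannot offset this decay. Each contributes $(r_j-x)/(r_j-1)=1+\frac{1-x}{r_j-1}\le e^{(1-x)/(r_j-1)}$, so the danger is a pile-up of real roots just above $1$, which would make $\sum_{r_j\ge 1}1/(r_j-1)$ large. This is exactly where the two so-far-unused hypotheses must enter. The bound $|Q(x)|\le 1$ on $[0,1]$ controls the leading coefficient through a Chebyshev/Markov estimate, $|F_M|\le 2^{O(M)}$, and combined with $|Q(1)|=|F_M|\prod_j|1-r_j|=\Omega(1/\mathrm{poly}(n))$ this lower-bounds $\prod_j|1-r_j|$ and thereby limits how strongly roots may accumulate near $1$, while the gap ``no roots in $[1-\epsilon,1)$'' keeps the relevant $r_j-1$ bounded away from $0$. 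I expect the quantitative separation --- certifying that the total expansion $\sum_{r_j\ge 1}1/(r_j-1)$ is $o(g(n))$, so that a constant fraction of the near-root decay survives --- to be the delicate point, most cleanly handled by noting that $\log|Q|$ is concave on the root-free interval $[1-\epsilon,1]$ and comparing its endpoint values using $\log|Q|\le 0$ and $\log|Q(1)|\ge -O(\log n)$.
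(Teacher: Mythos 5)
Your factorization strategy correctly isolates the crux of the lemma --- the real roots sitting just above $x=1$ --- but the certificate you propose for handling them, namely that $\sum_{r_j\ge 1}1/(r_j-1)=o(g(n))$, is not just unproven in your sketch: it is \emph{false} under the lemma's hypotheses, so no completion of this step is possible. Concretely, take $R=2$, let $g=g(n)=\omega(\mathrm{poly}(n))$, set $\delta=1/(gn)$, and consider
\[
Q(x)=\frac{(1+\delta-x)\,(x+1)^{g}}{\max_{t\in[0,1]}(1+\delta-t)\,(t+1)^{g}} .
\]
All $g+1$ roots ($-1$ with multiplicity $g$, and $1+\delta$) lie within distance $R$ of the origin, none lies in $[1-\epsilon,1)$, and $|Q|\le 1$ on $[0,1]$ by construction; the maximum of the numerator sits at $x^{*}=1-\Theta(1/g)$, and a short computation gives $|Q(1)|=\Theta(\delta g)=\Theta(1/n)=\Omega(1/\mathrm{poly}(n))$. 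Yet $\sum_{r_j\ge 1}1/(r_j-1)=1/\delta=gn=\omega(g(n))$. The lossy step in your accounting is the linearization $1+\frac{1-x}{r_j-1}\le e^{(1-x)/(r_j-1)}$: for a root at distance $\delta\ll 1-x$ above $1$, the true factor is only about $(1-x)/\delta$ (here polynomially large), while your bound records $e^{(1-x)/\delta}$, which is astronomically larger and would appear to swamp the contraction from the near-origin roots even though it does not. What actually needs to be controlled is $\sum_{r_j\ge1}\log\bigl(1+\frac{1-x}{r_j-1}\bigr)$, and the tools you list do not reach it: the Chebyshev-type bound $|F_M|\le 2^{O(M)}$ together with $|Q(1)|=|F_M|\prod_j|1-r_j|$ only yields $\prod_j|1-r_j|\ge 2^{-O(M)}$, where the degree $M$ is unrelated to (and can be far larger than) $g(n)$, so it places essentially no limit on roots accumulating near $1$.

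The paper's proof sidesteps this bookkeeping entirely, and the comparison is instructive. It sets $Y=\log|Q|$ and uses that \emph{every} real root, including those just above $1$, contributes negatively to the curvature $Y''(x)=-\sum_i (x-r_i)^{-2}$; the $g(n)$ near-origin roots then force $Y''\le -c\,g(n)$ on the interval. Integrating twice from $x=1$, the entire effect of the roots you are fighting is compressed into the single number $Y'(1)$. If $Y'(1)\ge 0$ the bound follows immediately; if $Y'(1)<0$, the paper localizes the maximum point $x_*$ of $Y$ (where $Y'(x_*)=0$) via $1\ge |Q(x_*)|\ge |Q(1)|\,e^{\frac{c}{2}g(n)(1-x_*)^2}$, which together with $|Q(1)|=\Omega(1/\mathrm{poly}(n))$ forces $(1-x_*)^2\le 2\log(1/|Q(1)|)/(c\,g(n))=o(1)$, and then integrates the curvature leftward from $x_*$. (In the example above this is exactly what happens: the root at $1+\delta$ drags the maximum to $x^{*}=1-\Theta(1/g)$, but no further.) Your closing remark about concavity of $\log|Q|$ gestures in this direction, but merely ``comparing endpoint values'' of a concave function is not sufficient --- one needs the quantitative curvature bound and the localization of the interior maximum --- and in your proposal it is invoked to certify a claim that is false. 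As written, the proposal therefore has a genuine gap at its decisive step.
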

\begin{proof}

We write $ Y(x)=\log |Q(x)|$. Taking the derivative, one can check that

 \[
 Y^{\prime}(x)=\frac{Q^{\prime}(x)}{Q(x)}=\sum_{i=1}^M \frac{1}{x-r_i},
 \]
 and
 \[
 Y^{\prime \prime}(x)=-\sum_{i=1}^M \frac{1}{(x-r_i)^2}.
 \]

\noindent This expression allows us to relate the behavior of $Q(x)$ to the number of roots that are within constant distance of the origin: 
 
One can directly bound

\[
Y^{\prime \prime}(x)=-\sum_{i=1}^M \frac{1}{(x-r_i)^2} \leq - \frac{g(n)}{(x+R)^2} \leq -c g(n) \forall x \in [1-\epsilon,1).
\]

\noindent Integrating this expression twice gives:

\[
 Y(x)-Y(1) \leq - Y^{\prime}(1) (1-x) - \frac{c}{2}(1-x)^2,
\]

\noindent and therefore

\[
|Q(x)| \leq |Q(1)| e^{-\frac{cg(n)}{2}(1-x)^2 -Y^{\prime}(1) (1-x)}.
\]

\noindent If $Y^{\prime}(1) \geq 0$, this expression proves the Lemma. If $Y^{\prime}(1) < 0$, we will consider the point $x_* \in [1-\epsilon,1)$ such that $Y^{\prime}(x_*)=0$. Again, integrating $Y^{\prime \prime}(x)$ twice yields

\[
\frac{|Q(x_*)|}{|Q(1)|} \geq e^{\frac{c}{2}g(n) (1-x_*)^2}.
\]
\noindent However, since $|Q(x_*)| \leq 1$, this places the restriction

\[
(1-x_*)^2 \leq \frac{2\log \frac{1}{|Q(1)|}}{c g(n)}=o(1),
\]
\noindent where we have used that $1/|Q(1)|=O(\mathrm{poly}(n))$ and $g(n)=\omega(\mathrm{poly}(n))$. One can then integrate starting from $x_*$:

\[
|Q(x)| \leq  |Q(x_*)| e^{-\frac{c}{2}g(n) (x-x_*)^2}, \forall x \in [1-\epsilon,x_*].
\]
\noindent Hence, the Lemma is satisfied, since $|Q(x_*)| \leq 1$, and $(x-x_*)^2=\Theta(1)$. 

\end{proof}

With these ingredients, we are now ready to state the main theorem:

\begin{theorem}\label{theorem:root_analysis}
Consider a circuit $\mathcal{C}$ comprised of Clifford + $T$ gates, and an observable $\left<O\right>_{\mathrm{err}}$, under the assumptions that all the roots of the polynomial $\left<O\right>_{\mathrm{err}}(p)$ are real, that there are no roots in the interval $(0,p^*]$ for some constant $p^*=\Theta(1)$, and that in the noiseless case $\left| \left<O\right>(p=0)\right| = \Omega(1/\mathrm{poly}(n))$. 
Then, if there is no classical algorithm that can approximate the expectation value $\left<O\right>_{\mathrm{err}}(p)$ to precision $\epsilon$ in time $T=O\left(\mathrm{poly}(n,1/\epsilon\right))$ for all constant noise rates $p\in\left(0,1\right]$, the noisy expectation value $\left|\left<O\right>_{\mathrm{err}}(p)\right|$ must be vanishingly small: 

\[
 \left|\left<O\right>_{\mathrm{err}}(p)\right| =\frac{1}{\omega(\mathrm{poly(n)})}, \forall p \in (0,p^*].
\]

\end{theorem}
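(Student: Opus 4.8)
The plan is to reduce the entire statement to a dichotomy governed by the single quantity $\|F\|_2=\sqrt{\sum_w F_w^2}$: either it is polynomially bounded, in which case the Pauli path estimator itself runs in polynomial time, or it is superpolynomial, in which case Lemmas \ref{lemma:roots_origin} and \ref{lemma:polynomial_and_roots} force the noisy expectation value to be vanishingly small. The theorem is then proved by contraposition, since the assumed absence of an efficient classical algorithm rules out the first branch and places us in the second.

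First I would prove the \emph{easy direction}: if $\|F\|_2=O(\mathrm{poly}(n))$ then $\left<O\right>_{\mathrm{err}}(p)$ can be approximated to precision $\epsilon$ in time $\mathrm{poly}(n,1/\epsilon)$ for every fixed constant $p\in(0,1]$. The truncation error is controlled by Cauchy--Schwarz,
\[
\left|\sum_{w>\ell}F_w(1-p)^w\right|\le \|F\|_2\,\sqrt{\sum_{w>\ell}(1-p)^{2w}}\le \frac{\|F\|_2\,(1-p)^{\ell+1}}{\sqrt{1-(1-p)^2}},
\]
so for fixed $p$ it suffices to take $\ell=O\!\left(\log(\|F\|_2/\epsilon)\right)=O(\log(n/\epsilon))$. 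Since the counting argument behind Lemma \ref{lemma:counting_geometric} gives $\sum_{w\le\ell}N_w\le 2^{O(\ell)}$ even in the worst case (fraction of $T$ gates $Q=1$), only $\mathrm{poly}(n,1/\epsilon)$ nonzero paths must be enumerated, each evaluable in $O(nd)$ time. This yields a polynomial algorithm, and hence the contrapositive: no efficient algorithm for all constant $p$ implies $\|F\|_2=\omega(\mathrm{poly}(n))$, i.e.\ $\log\|F\|_2=\omega(\log n)$.

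Next I would run the \emph{hard direction}. Writing $x=1-p$ and $Q(x)=\left<O\right>_{\mathrm{err}}=\sum_w F_w x^w$, Lemma \ref{lemma:roots_origin} supplies $g(n)=\Omega(\log\|F\|_2)=\omega(\log n)$ real roots inside a disc of radius $R=\Theta(1)$ about the origin. I would then verify the hypotheses of Lemma \ref{lemma:polynomial_and_roots}: (i) $|Q(x)|\le 1$ for $x\in[0,1]$ holds automatically, because on this physical range $Q(x)$ equals a genuine expectation value $\mathrm{tr}(O\,\Phi_{\mathcal{C},p}^{\mathrm{noisy}}(\rho_0))$ of a unit-norm Pauli observable; (ii) the assumption of no roots in $p\in(0,p^*]$ becomes, under $x=1-p$, no roots in $x\in[1-p^*,1)$, so I set $\epsilon=p^*=\Theta(1)$; and (iii) $|Q(1)|=|\left<O\right>(p=0)|=\Omega(1/\mathrm{poly}(n))$ is exactly the noiseless assumption. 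Lemma \ref{lemma:polynomial_and_roots} then gives $|Q(x)|=e^{-\Omega(g(n))x^2}$ on $[1-p^*,1)$; since $x=1-p\ge 1-p^*=\Theta(1)$ throughout $p\in(0,p^*]$, this is $e^{-\Omega(g(n))}$, and $g(n)=\omega(\log n)$ makes it $n^{-\omega(1)}=1/\omega(\mathrm{poly}(n))$, which is the claim.

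The main obstacle is the quantitative bookkeeping linking the two lemmas. Lemma \ref{lemma:roots_origin} produces only $\Omega(\log\|F\|_2)$ roots, which is $\omega(\log n)$ rather than $\omega(\mathrm{poly}(n))$, so the crux is to check that $\omega(\log n)$ roots within constant distance of the origin already suffice --- both to push the critical point $x_*$ into the root-free window $[1-p^*,1)$ (which needs $\log(1/|Q(1)|)/g(n)=o(1)$, valid since $1/|Q(1)|=\mathrm{poly}(n)$) and to drive $|Q|$ below every inverse polynomial at the constant value $x=1-p$. Confirming that the curvature bound $Y''(x)=-\sum_i (x-r_i)^{-2}\le -c\,g(n)$ used in Lemma \ref{lemma:polynomial_and_roots} carries through with $g(n)=\Omega(\log\|F\|_2)$, and that the resulting Gaussian decay is evaluated at constant $x$, is the delicate step; the remaining ingredients (boundedness of $Q$, the change of variable $x=1-p$, and the noiseless lower bound on $|Q(1)|$) are routine.
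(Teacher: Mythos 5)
Your proposal is correct and follows essentially the same route as the paper's proof: the dichotomy on $\|F\|_2$, the Cauchy--Schwarz truncation argument giving a $\mathrm{poly}(n,1/\epsilon)$ algorithm when $\|F\|_2=O(\mathrm{poly}(n))$, and the combination of Lemmas \ref{lemma:roots_origin} and \ref{lemma:polynomial_and_roots} (via $x=1-p$, the root-free window $[1-p^*,1)$, and the lower bound on $|Q(1)|$) for the fragility branch. Your bookkeeping is in fact more careful than the paper's own proof, which misquotes Lemma \ref{lemma:roots_origin} as supplying $\Omega(\|F\|_2)$ roots (it gives only $\Omega(\log\|F\|_2)=\omega(\log n)$ when $\|F\|_2=\omega(\mathrm{poly}(n))$) and never reconciles this with the stated $g(n)=\omega(\mathrm{poly}(n))$ hypothesis of Lemma \ref{lemma:polynomial_and_roots}; your observation that $\omega(\log n)$ roots already suffice --- because the critical-point step only needs $\log(1/|Q(1)|)/g(n)=o(1)$ and the Gaussian decay is evaluated at constant $x$ --- is exactly the repair that makes the argument go through.
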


\begin{proof} We first assume that $\|F\|_2 = O(\mathrm{poly}(n))$. The classical algorithm consists on computing all the nonzero Pauli paths up to a Pauli weight $\ell=O(\log n)$. The error is bounded using the Cauchy-Schwarz inequality:

\[
\left| \left<O\right>_{\mathrm{err}}-\left<O\right>_{\ell}\right|=\left| \sum_{w>\ell}F_w \left(1-p\right)^w \right| \ \leq (1-p)^{\ell} \sum_{w>\ell} |F_w| \leq (1-p)^{\ell} nd \|F\|_2.
\]

\noindent To ensure a precision $\epsilon$, it is enough to pick

\[
\ell = \frac{\log \left( \frac{\epsilon}{nd \|F\|_2} \right)}{\log (1-p)}.
\]

\noindent Therefore, $\ell=O(\log n)$ as long as $\|F\|_2= O(\mathrm{poly}(n))$, and thus the classical algorithm succeeds $\forall p \in [0,1)$ in time $T=\mathrm{poly}(n,1/\epsilon)$.

From this we deduce that $\|F\|_2 = \omega(\mathrm{poly}(n))$ is needed for the classical efficient algorithm to fail. However, in the following we will show that, if the roots of the polynomial are real, this implies fragility to noise. In Lemma \ref{lemma:roots_origin} we show that there must be $\Omega(\|F\|_2)$ roots within a distance $R=\Theta(1)$ of the origin, and in Lemma \ref{lemma:polynomial_and_roots} we show that this implies that $\left|\left<O\right>_{\mathrm{err}}(p)\right| = 1/ \Omega(\|F\|_2) \forall p \in (0,p^*]$ for some constant $p^*$.

\end{proof}

\section{Asymptotic notation}\label{appendix:notation}

Throughout the paper employ the following asymptotic notation commonly used in complexity theory \cite{cormen2022introduction}:

\begin{center}
\begin{tabular}{ |c|c|c| } 
\hline
 Notation & Formal definition & Informal description \\ 
 \hline
 $f(n)=o(g(n))$ & $\forall k>0 \exists n_0 \forall n>n_0: |f(n)| < kg(n)$ & $f(n)$ grows strictly slower than $g(n)$ \\  
 $f(n)=O(g(n))$ & $\exists k>0 \exists n_0 \forall n>n_0: |f(n)| \leq kg(n)$ & $f(n)$ grows no faster than $g(n)$  \\
 $f(n)=\Omega(g(n))$ & $\exists k>0 \exists n_0 \forall n>n_0: |f(n)| \geq kg(n)$ & $f(n)$ grows at least as fast as $g(n)$ \\ 
 $f(n)=\omega(g(n))$ & $\forall k>0 \exists n_0 \forall n>n_0: |f(n)| > kg(n)$ & $f(n)$ grows faster than $g(n)$ \\
 \hline
\end{tabular}
\end{center}

\end{document}